\definecolor{darkblue}{rgb}{0.0,0.0,0.3}
\definecolor{Gray}{gray}{0.9}
\algnewcommand\algorithmicinput{\textbf{Update:}}
\algnewcommand\Update{\item[\algorithmicinput]}
\algnewcommand\Initialize{\item[{\textbf{Initialize:}}]}
\algnewcommand\Input{\item[{\textbf{Input:}}]}
\algnewcommand\Output{\item[{\textbf{Output:}}]}
\algnewcommand\Set{\item[{\textbf{Set:}}]}
\algnewcommand\Define{\item[{\textbf{Define:}}]}
\algnewcommand\Fix{\item[{\textbf{Fix:}}]}
\algnewcommand\Transform{\item[{\textbf{Transform:}}]}
\newtheorem{theorem}{Theorem}
\newtheorem{lemma}{Lemma}
\newtheorem{proposition}{Proposition}
\title{\textbf{Statistical inference with normal-compound gamma priors in regression models}}
\author[1]{Ahmed Alhamzawi\thanks{ahmed.alhamzawi@qu.edu.iq}}
\author[2]{Gorgees Shaheed Mohammad\thanks{gorgees.alsalamy@qu.edu.iq}}
\affil[1]{University of AL-Qadisiyah, College of Science, Department of Mathematics, Iraq}
\affil[2]{University of AL-Qadisiyah, College of Education, Department of Mathematics, Iraq}
\begin{document}

\maketitle

\begin{abstract}
\normalsize
Scale-mixture shrinkage priors have recently been shown to possess robust empirical performance and excellent theoretical properties such as model selection consistency and (near) minimax posterior contraction rates. In this paper, the normal-compound gamma prior (NCG) resulting from compounding on the respective inverse-scale parameters with gamma distribution is used as a prior for the scale parameter. Attractiveness of this model becomes apparent due to its relationship to various useful models. The tuning of the hyperparameters gives the same shrinkage properties exhibited by some other models. Using different sets of conditions, the posterior is shown to be both strongly consistent and have nearly-optimal contraction rates depending on the set of assumptions. Furthermore, the Monte Carlo Markov Chain (MCMC) and Variational Bayes algorithms are derived, then a method is proposed for updating the hyperparameters and is incorporated into the MCMC and Variational Bayes algorithms. Finally, empirical evidence of the attractiveness of this model is demonstrated using both real and simulated data, to compare the predicted results with previous models.
\end{abstract}

\section{Introduction}
Normal linear regression models are likely the most appropriate   statistical models
used to  illustrate  the influence of a set of covariates  on an outcome variable. The normal linear model can be written as:
\begin{eqnarray}\label{ahmed_1}
\boldsymbol{y} = X\boldsymbol\beta + \boldsymbol\epsilon, \,
\end{eqnarray}
where $\boldsymbol{y} = (y_1, \cdots, y_n)^{T}$ is a vector of observed values, $X=(\boldsymbol{x}_1, \cdots, \boldsymbol{x}_p)$ is an $n \times p$ design matrix of covariates with $\boldsymbol{x}_j=(x_{1j}, \cdots, x_{nj} )^T$, $\boldsymbol{\beta}=(\beta_1, \cdots, \beta_p)^T $  is a $p \times 1$ of unknown regression coefficients, $\boldsymbol\epsilon=(\epsilon_1, \cdots, \epsilon_n)^T$ and $\epsilon_i\sim N(0, \sigma^2)$  where $\sigma^2$ is the unknown variance.
Under  model (\ref{ahmed_1}), it is assumed that only a subset of the covariates are active in the regresssion, so that the covariate  selection problem is to  recognize this unknown subset of covariates. Various  methods have been developed over the years for identifying the active covariates in the regression. For example see, the traditional model selection methods such as,
Akaike information criterion \cite[AIC;][]{Akaike_1973}, Bayesian information criterion  \cite[BIC;][]{schwarz1978estimating}, Mallows' C$_p$-statistic \citep{mallows1973some} and the deviance information criteria  \citep[DIC;][]{spiegelhalter2002bayesian}. These  criterions were used to compare $2^p$ candidate models. However, when  $p$ is large, it needs highly greedy computations.

Variable selection by penalized least squares regression has received considerable attention in the last three decades, for example see, the Lasso  \citep{Tibshirani_1996}, the adaptive Lasso  \citep{zou2006adaptive}, the elastic net \citep{Zou_and_Hastie_2005}, the adaptive elastic net \citep{Zou_and_Zhang_2009},   the group Lasso \citep{Yuan_and_Lin_2005}, the bridge \citep{frank1993statistical}, the group bridge \citep{huang2009group} and the reciprocal Lasso \citep{song2015high}. These approaches  provide theoretically attractive estimators as well as it     provide an efficacious and computationally appealing alternative to the  classical model selection approaches. However, despite being theoretically attractive   and have nice properties in terms of variable selection and coefficient  estimation, these approaches  usually cannot provide  valid standard errors \citep{kyung2010penalized}.

Bayesian inference  approaches overcome this problem by giving a more valid measure of the standard errors based on a stationary geometrically ergodic MCMC algorithms \citep{kyung2010penalized}. Therefore, in the last two decades, great works have been done in the direction of Bayesian methodology (see, the Bayesian Lasso \citep{Park_and_Casella}, the Bayesian adaptive Lasso \citep{alhamzawi2018bayesian}, the Bayesian elastic net \citep{li2010bayesian1},  the Bayesian adaptive elastic net \citep{gefang2014bayesian}, the Bayesian group Lasso \citep{xu2015bayesian} and the bayesian bridge \citep{polson2014bayesian}). These Bayesian approaches  can be  acquired by putting  a suitable prior distribution on the regression coefficients that will mimic the property of the corresponding   penalty. The performance of resulting estimators  depends on the form of the priors for $\bm{\beta}$. Most of the above Bayesian approaches  are based on the scale mixture of normals (SMN)  of the associated prior distribution for the corresponding penalty. Very recently, when $p\gg n$,  much Bayesian work has been done on sparse regression using   spike-and-slab priors with point masses at zero \citep{castillo2015bayesian,rovckova2018spike} and  nonlocal priors \citep{johnson2012bayesian,shin2018scalable,mallick2021reciprocal}.

In this paper, our main goal to study the properties of the Normal-Compound gamma model and show its robustness at handling
sparsity and non-sparsity signals and it show it is relationship to different models as special cases. In section 2, we present our model, which is based on a multivariate-normal scale mixture can be viewed as a generalization to a large array of useful models by connecting to a large family of hierarchical priors as special cases such as the horseshoe (half-cauchy) prior, beta prime prior, generalized beta prior, etc. The elegance of these models promotes us to study their generalizations and compare the results with for different choices of hyperparameters. In section 3, we derive the Gibbs sampler and the variational Bayes and then incorporate the relevant empirical Bayes method. Additinally, in section 4 we show that this prior achieves posterior consistency for both $p \leq n$ and $p\geq n$ as $n\rightarrow 0$. Finally, in section 5 we use real and simulated data to test the performance of these approaches.

\newpage

\section{ Scale Mixture of Compound Gamma Distribution }
\textbf{Proposition}

A compound gamma density resulting from the compounding of $N$ gamma distributions  can be written as
\begin{equation}\label{marginal-prior1}
\pi(\textbf{x}) = \int_0^\infty\ldots\int_0^\infty
\left[\prod_{i=1}^{N}\frac{\textbf{z}_{i+1}^{c_{i}}}{\Gamma(c_i)}\textbf{z}_i^{c_{i}-1}
\exp\left\{-\textbf{z}_{i}\textbf{z}_{i+1}\right\}\right]
d\textbf{z}_2 \ldots d\textbf{z}_N
\end{equation}
where $\textbf{z}_1=\textbf{x}$, $\textbf{z}_{i}=(z_{i1},\dots,z_{ip})$ and $\textbf{z}_{N+1}=\phi$ is some constant that may be determined from the data.
The proof of this result is straightforward and thus will be omitted. This density can be considered as generalization of different previous proposals with
different behaviors at their respective tails and origin depending on the value of $N$ and $c_{i}$. Some popular special cases include
the three-parameter Beta Distribution \citep{armagan2011generalized}, the scaled Beta2  family of distributions \citep{perez2017scaled} when $N=2$, the Beta prime distribution when $N=2$ and $\phi=1$ \citep{bai2018beta} and
the horseshoe prior for $N=4$ and $c_1=c_2=c_3=c_4=1/2$ \citep{carvalho2010horseshoe},
where $\beta_i|\text{rest} \sim \mathcal{N}(0,\sigma^2 {z}_1),\quad {z}_1^{1/2} \sim \text{C}^{+}(0,{z}_2), \ \text{and} \
{z}_2^{1/2} \sim \text{C}^{+}(0,1)$. The parameter $\beta$ maybe estimated using the scale-mixture method.
The normal-compound gamma scale mixture model is given by

\begin{align}\label{ahmed2}
&\beta_j|\sigma^2,z_1 \sim \mathcal{N}(0,\sigma^2 {z}_1)\\
&{z}_1 \sim \mathcal{CG}(c_1,\ldots,c_N,\phi)\nonumber\\
&\sigma^2 \sim \mathcal{IG}(c_{0},d_{0})\nonumber
\end{align}

To simplify the complexity of the prior proposed in (\ref{marginal-prior1}), we propose an alterative way to compute the full conditionals by using the following equivalence

\begin{proposition}\label{equivalence-proposition}
If ${z}_1 \sim \mathcal{CG}(c_1,\ldots,c_N,\phi)$, then
\begin{flalign}\label{qprior}
\text{(1)}\,\, {z}_1 \sim \mathcal{G}(c_1,{z}_2),{z}_2 \sim \mathcal{G}(c_2,{z}_3),\ldots,{z}_{N} \sim \mathcal{G}(c_{N},\phi)&&
\end{flalign}
\begin{flalign}
\text{(2)}\,\, {z}_1 \sim \mathcal{G}(c_1,1),{z}_2 \sim \mathcal{IG}(c_2,1),\ldots,{z}_{N} \sim \mathcal{AG}(N,c_N,\phi)&&
\end{flalign}
where $\mathcal{AG}(N,a,b)=\begin{cases}
      \mathcal{G}(a,b) & \text{odd}\,\,\,{N} \\
      \mathcal{IG}(a,b) & \text{even}\,\,\,{N}
   \end{cases}$
 and $\mathcal{G}(a,b)$ is the gamma distribution with shape $a$ and inverse scale
(rate) parameter $b$.
\end{proposition}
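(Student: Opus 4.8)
The plan is to treat the two equivalences separately, since (1) is essentially the definition of $\mathcal{CG}$ unwound while (2) requires an honest computation. For (1), I would first observe that the $i$-th factor appearing inside the integral (\ref{marginal-prior1}),
\begin{equation*}
\frac{{z}_{i+1}^{c_i}}{\Gamma(c_i)}{z}_i^{c_i-1}\exp\{-{z}_i {z}_{i+1}\},
\end{equation*}
is exactly the $\mathcal{G}(c_i,{z}_{i+1})$ density in ${z}_i$ with rate ${z}_{i+1}$, where ${z}_{N+1}=\phi$. Hence the full integrand is the joint density of the Markov chain ${z}_1\mid {z}_2 \sim \mathcal{G}(c_1,{z}_2),\ldots,{z}_N\sim\mathcal{G}(c_N,\phi)$, and integrating out ${z}_2,\ldots,{z}_N$ returns the marginal of ${z}_1$, which is $\pi(\mathbf{x})$ by construction. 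So (1) holds by inspection, with no real work beyond recognizing the rate parameterization.

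For (2) the key tool is the gamma scaling lemma: if $X\mid\theta\sim\mathcal{G}(a,\theta)$ with rate $\theta>0$, then $\theta X\sim\mathcal{G}(a,1)$ and is independent of $\theta$, which a one-line change of variables $(X,\theta)\mapsto(\theta X,\theta)$ verifies. Applying it at the top of the chain from (1), I would set $g_1:={z}_1{z}_2\sim\mathcal{G}(c_1,1)$, independent of ${z}_2$ and hence of everything downstream, so that ${z}_1 = g_1/{z}_2$. Recursing down the chain, each level peels off an independent factor $g_i:={z}_i{z}_{i+1}\sim\mathcal{G}(c_i,1)$, and the bottom level gives ${z}_N = g_N/\phi$. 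Back-substituting yields the alternating product
\begin{equation*}
{z}_1 \stackrel{d}{=} g_1\,g_2^{-1}\,g_3\,g_4^{-1}\cdots,
\end{equation*}
with the $g_i$ mutually independent. Rewriting each reciprocal $g_{2k}^{-1}$ as an $\mathcal{IG}(c_{2k},1)$ variable produces exactly the alternating gamma / inverse-gamma list in (2); the $\phi$ from the bottom level rides on the last factor, landing in the numerator when $N$ is odd (giving $\mathcal{G}(c_N,\phi)$) and in the denominator when $N$ is even (giving $\mathcal{IG}(c_N,\phi)$), which is precisely $\mathcal{AG}(N,c_N,\phi)$.

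The main obstacle I expect is the parity bookkeeping together with the careful propagation of independence. Concretely, one must track which $g_i$ end up in the numerator versus the denominator along the recursion and confirm that the scale $\phi$ attaches to the correct factor as a function of the parity of $N$; and one must verify at each step that the freshly peeled factor $g_i$ is independent not merely of ${z}_{i+1}$ but of the entire tail $({z}_{i+1},\ldots,{z}_N)$, so that the final factors are jointly independent. A clean alternative that sidesteps the explicit product is induction on $N$: take $N=2$ as the base case, where a single change-of-variables integral shows that the product of a $\mathcal{G}(c_1,1)$ variable and an independent $\mathcal{IG}(c_2,\phi)$ variable has the beta-prime (compound-gamma) density
\begin{equation*}
\pi(x)\propto \frac{\phi^{c_2}\,x^{c_1-1}}{(x+\phi)^{c_1+c_2}},
\end{equation*}
and then append one further level of the chain in the inductive step, using the scaling lemma to absorb it into a new leading factor while flipping the parity. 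Either route reduces the statement to two elementary facts, namely gamma scaling/independence and the reciprocal identity that the inverse of a $\mathcal{G}(a,b)$ variable is $\mathcal{IG}(a,b)$, so the only genuine care needed is in the indexing.
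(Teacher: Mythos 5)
Your proof is correct, and it reaches the paper's conclusion by a genuinely different route in execution, though the underlying decomposition is the same. The paper disposes of part (1) without comment (treating it, as you do, as the definition of $\mathcal{CG}$ unwound) and proves part (2) in a single display: an explicit change of variables inside the $N$-fold integral (\ref{marginal-prior1}), after which the integrand factors into a $\mathcal{G}(c_1,1)$ density times alternating $\mathcal{IG}(\cdot,c_i,1)$ and $\mathcal{G}(\cdot,c_i,1)$ densities evaluated at successive ratios $z_i/z_{i+1}$, terminating in an $\mathcal{AG}(\cdot,N,c_N,\phi)$ factor, with Jacobian contributions $1/z_i$. That computation is terse (and marred by typos, e.g.\ $\mathcal{IG}(z_3/z_3,c_2,1)$ where $z_2/z_3$ is evidently meant), and it never addresses why the resulting ratio variables are mutually independent. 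You work instead at the level of random variables: the gamma scaling lemma peels off $g_i = z_i z_{i+1} \sim \mathcal{G}(c_i,1)$, and the Markov structure of chain (1) gives $g_i \perp (z_{i+1},\ldots,z_N)$ by conditioning, hence joint independence of all the $g_i$ --- precisely the bookkeeping the paper's one-line substitution glosses over. Your back-substitution $z_1 \stackrel{d}{=} g_1\, g_2^{-1}\, g_3 \cdots$ with $z_N = g_N/\phi$ also lands the parity of the terminal factor correctly ($\mathcal{G}(c_N,\phi)$ for odd $N$, $\mathcal{IG}(c_N,\phi)$ for even $N$), as checking $N=2,3,4$ confirms; your $g_i$ are exactly the paper's transformed integration variables in distributional clothing. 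What each approach buys: the paper's Jacobian computation is a compact one-display derivation, while yours makes the independence of the factors rigorous with no Jacobian at all; your proposed induction on $N$ is likewise sound, and its $N=2$ base case is the scaled beta-prime density that the paper itself invokes as a special case in Section 2 (and reuses in the proof of Theorem \ref{pythagorean}).
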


\begin{proof}
The proof of this equivalence is provided in the Appendix.
\end{proof}

\begin{figure}[h]
\begin{center}
    \includegraphics[width=80mm]{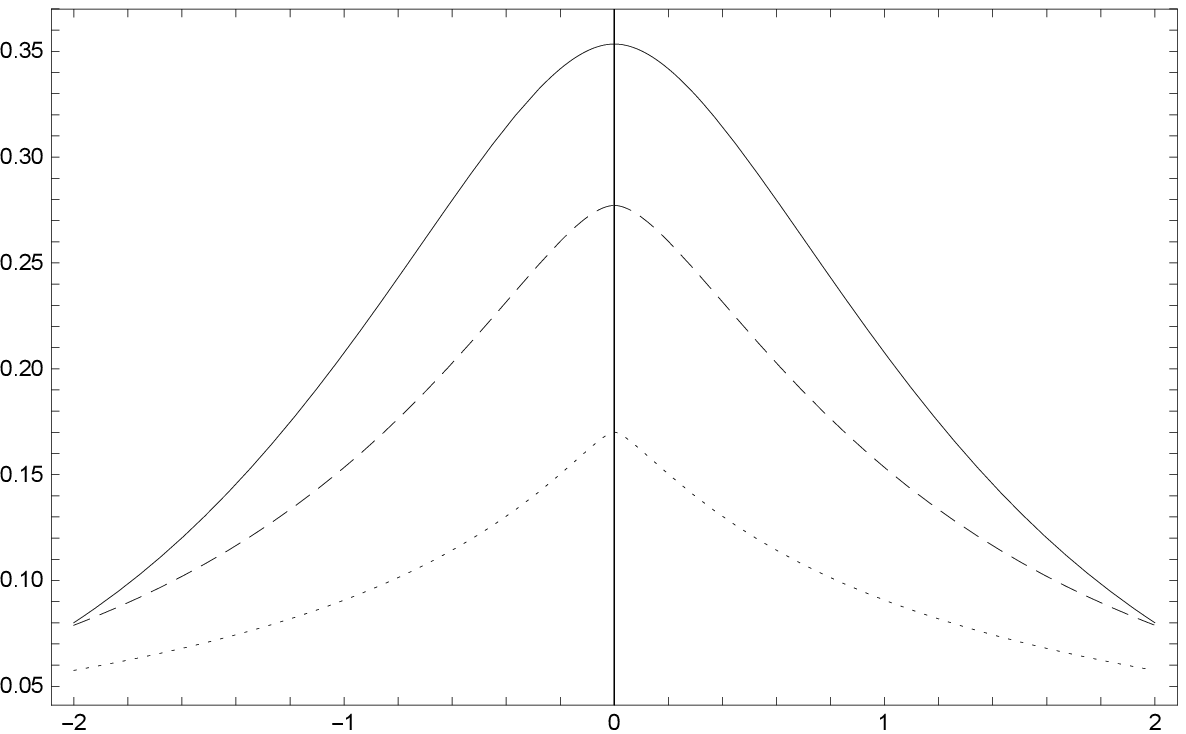}
    \includegraphics[width=80mm]{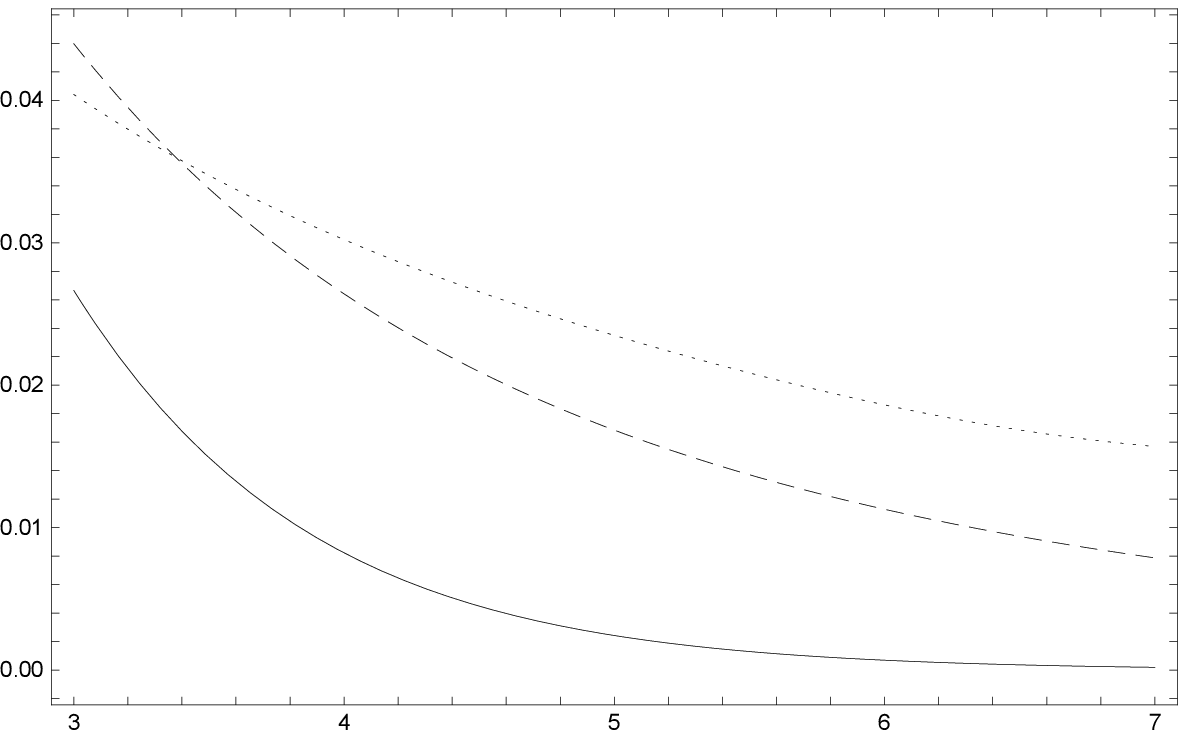}
    \includegraphics[width=80mm]{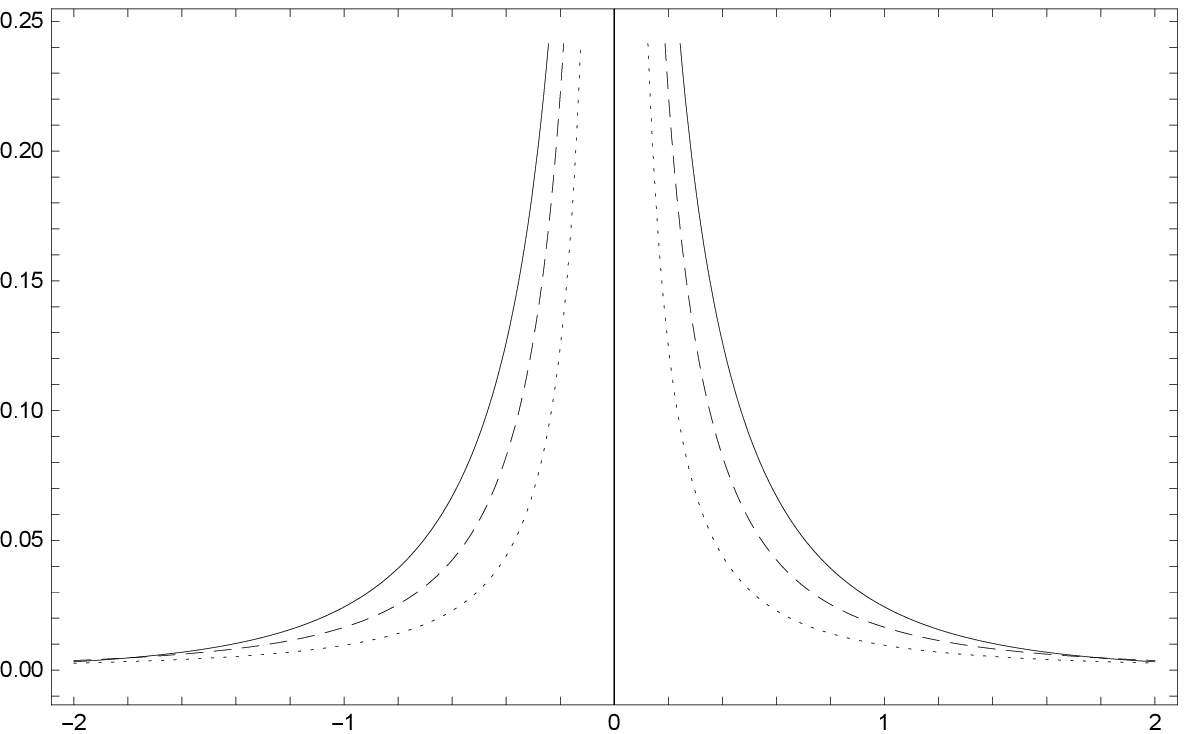}
    \includegraphics[width=80mm]{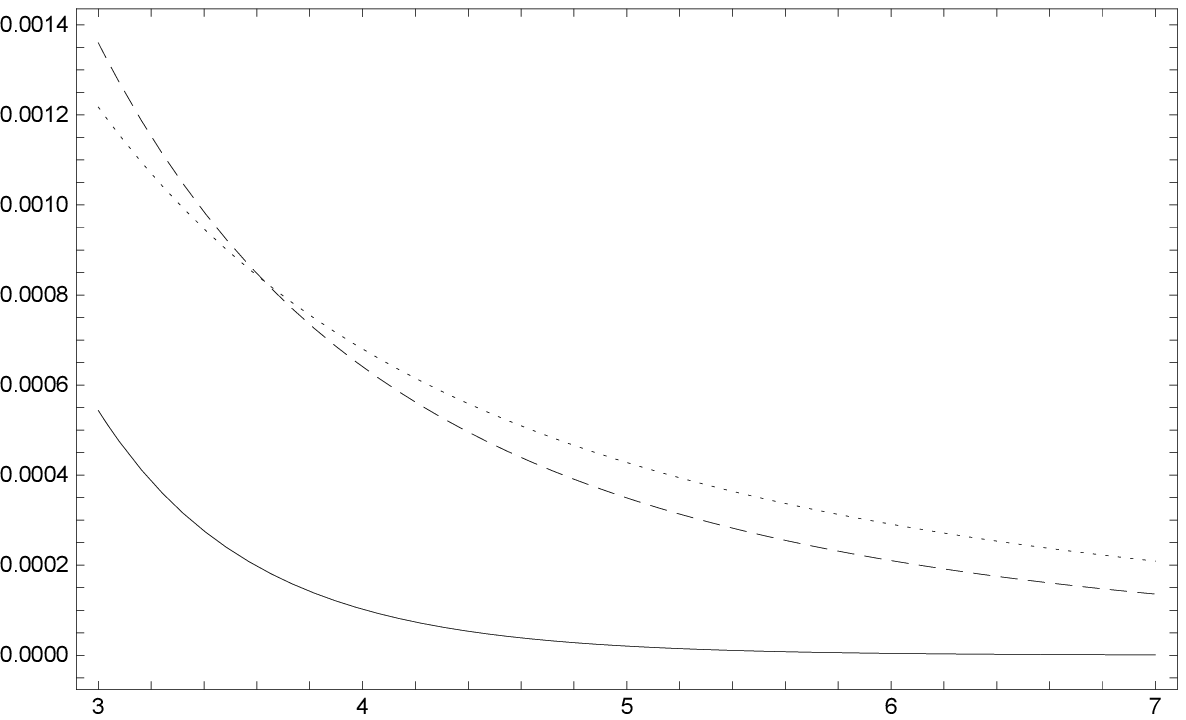}
    \caption{Plot of the marginal prior for different values of $c_1$ (top $c_1=1.5$, bottom $c_1=0.15$) and $N$. Solid line for $N=2$,
     dashed line for $N=4$ and dotted line for  $N=8$}
    \label{figure for different N and a}
\end{center}
\end{figure}

The properties of our prior are indicated in Figure \ref{figure for different N and a} for different values of $N$ and $c_1$.
We can clearly see that for large values of $c_1$, as the value of $N$ increase we have a falter head with a heavier tail.
Conversely, for smaller $c_1$, we see that we have a pole at zero and as $N$ increases the pole diverges faster which   indicates a good sparse prior.
This clearly demonstrates how our model serves as candidate for both sparse and non-sparse model.

To continue with our study, it is necessary to use some notations: $a_n\prec b_n$ or $a_n=o(b_n)$ will donate $\lim a_n/b_n=0$,
while $a_n\lesssim b_n$ or $a_n=O(b_n)$ means $a_n<C b_n$ for some $C>0$ and
we write $a_n\asymp b_n$ if $a_n\lesssim b_n$ and $b_n\lesssim a_n$,

\section{Posterior Inference and Empirical Bayes}

Using the hierarchical representation in (\ref{ahmed2}) we can derive the full conditional probability as following:

\begin{itemize}

\item{Update $\beta$}

\begin{align}
\begin{split}
P(\beta|X,y,...) & \propto P(y|\beta,...)\pi(\beta),\\
&\propto\exp\left\{-\frac{1}{2\sigma^2}(y-X\beta)^T (y-X\beta)
-\frac{1}{2\sigma^2}\beta^T \textbf{Z}^{-1} \beta\right\},\\
&=\exp\left\{-\frac{\Sigma}{2\sigma^2}\left(-2{\mu}^T_{\beta}
\beta+\beta^T  \beta\right)\right\},\\
\end{split}
\end{align}

where $\mu_{\beta}= \Sigma^{-1}X^T y$ and $\Sigma=  X^T X+ \textbf{Z}^{-1}$. Therefore, we have the normal distribution $\mathcal{N}(\mu_{\beta},\Sigma^{-1}\sigma^2)$ as a conditional distribution for $\bm{\beta}$. For odd $k$ we have

\item{Update odd ${z}_{k}$}

\begin{align}
\begin{split}
P({z}_{k}|X,y,...) & \propto \pi(\beta_i|{z}_{1},{z}_{2},\ldots,{z}_{N},\sigma^2)\pi({z}_{k})\\
& \propto \frac{1}{\sqrt{{z}_{k}}} \exp\left\{-\frac{\beta^T \textbf{Z}^{-1} \beta}{2\sigma^2}\right\}
\times \left({z}_{k}\right)^{c_k-1} \exp\left\{-{z}_{k}{\varphi_k}\right\},\\
& \propto \left({z}_k\right)^{\left(c_k-\frac{1}{2}\right)-1} \exp\left\{-\frac{1}{2}\left[
\frac{\beta^T \textbf{Z}^{-1}_{-k} \beta}{\sigma^2}\left({z}_{k}\right)^{-1}+2{z}_{k}{\varphi_k}\right]\right\},\\
\end{split}
\end{align}

where $\varphi_k=1+(\phi-1)I(k-N)$. Thus, we have the generalized inverse gaussian distribution
$\mathcal{GIG}\left(\frac{\beta^T \textbf{Z}^{-1}_{-k} \beta}{\sigma^2},2{\varphi_k},c_k-\frac{1}{2}\right)$. For even $k$, we have

\item{Update even ${z}_{k}$}

\begin{align}
\begin{split}
P({z}_{k}|X,y,...) & \propto \pi(\beta_i|{z}_{1},{z}_{2},\ldots,{z}_{N},\sigma^2)\pi({z}_{k})\\
& \propto \frac{1}{\sqrt{{z}_{k}}} \exp\left\{-\frac{\beta^T \textbf{Z}^{-1} \beta}{2\sigma^2}\right\}
\times \left({z}_{k}\right)^{-c_k-1} \exp\left\{-\frac{{\varphi_k}}{{z}_{k}}\right\},\\
& \propto \left({z}_{k}\right)^{-\left(c_k+\frac{1}{2}\right)-1} \exp\left\{-\left[
\frac{\beta^T \textbf{Z}^{-1}_{-k} \beta}{2\sigma^2}+{\varphi_k}\right]\left({z}_{k}\right)^{-1}\right\},\\
\end{split}
\end{align}

which is the inverse-gamma distribution
$\mathcal{IG}\left(c_k+\frac{1}{2},\frac{\beta^T \textbf{Z}^{-1}_{-k} \beta}{2\sigma^2}+{\varphi_k}\right)$.

\item{Update $\sigma^2$}

\begin{align}
\begin{split}
P(\sigma^2|X,y,...) & \propto P(y|\beta,...)\pi(\beta_i|{z}_{1},{z}_{2},\ldots,{z}_{N},\sigma^2)\pi(\sigma^2)\\
&\propto  \left(\sigma^2\right)^{-n/2} \exp\left\{-\frac{(y-X\beta)^T (y-X\beta)}{2\sigma^2}\right\}
\times \left(\sigma^2\right)^{-p/2}    \exp\left\{-\frac{\beta^T \textbf{Z}^{-1} \beta}{2\sigma^2}\right\}\\
&\qquad\qquad\times \left(\sigma^2\right)^{-c_0-1}   \exp\left\{-\frac{d_0}{\sigma^2}\right\}\\
&\propto \left(\sigma^2\right)^{-\left(\frac{n+p+2c_0}{2}\right)-1}
\exp\left\{-\frac{ (y-X\beta)^T (y-X\beta)+\beta^T \textbf{Z}^{-1} \beta+2d_0
}{2\sigma^2}\right\}\\
\end{split}
\end{align}
which is the inverse-gamma distribution $\mathcal{IG}\left(\frac{n+p+2c_0}{2},
\frac{(y-X\beta)^T (y-X\beta)+\beta^T \textbf{Z}^{-1} \beta+2d_0}{2}\right)$.

\end{itemize}

Alternatively, we may use the variational Bayes method \citep{bishop2000variational,jordan1999introduction} with
$q_0^*(\boldsymbol{\beta})\sim\mathcal{N}(\mu^*,V^*)$, $q_k^*({\textbf{z}}_{\text{odd}})\sim\mathcal{GIG}\left(c_k^*,2,c_k - 1/2 \right)$, $q_k^*({\textbf{z}}_{\text{even}})\sim\mathcal{IG}\left(c_k+1/2,c_k^*\right)$ and $q_{N+1}^*(\sigma^2)\sim\mathcal{IG}\left((n+p+2c_0)/2,d_0^*\right)$, where $\mu^*=(X^T X+{\textbf{Z}^*}^{-1})^{-1} X^T y$,
${V^*}=\mathbb{E}_{q_{N+1}^*}(\sigma^2) (X^T X+{\textbf{Z}^*}^{-1})^{-1} $,
${\textbf{Z}^*}^{-1}= \mathrm{diag}( \prod_{k=1}^{N}\mathbb{E}_{q_{k}^*}(z_{k1}^{-1}),\ldots,\prod_{k=1}^{N}\mathbb{E}_{q_{k}^*}(z_{kp}^{-1}))$, $\mathbb{E}_{q_{0}^*}(\beta^2_i)={\mu^*_i}^2+V^*_{ii}$, $\mathbb{E}_{q_{0}^*}\left(\parallel y-X\beta \parallel_2^2\right)=\parallel y-X\mu^* \parallel_2^2 +\text{tr}(X^T X V^*)$, $\mathbb{E}_{q_{}^*}\left(X^T \Lambda X\right)=\sum^p_{i=1}{\beta^*_i}^2 \prod_{k=1}^N \mathbb{E}_{q_{k}^*}(z_{ki}^{-1}) + \text{tr}(\Lambda V^*_{ii})$, $c_{\text{odd}\,k}^*=\mathbb{E}_{q_{0}^*}(\beta^2)\mathbb{E}_{q_{N+1}^*}(\sigma^{-2})\prod_{i=1,i\neq k}^{N}\mathbb{E}(z_i^{-1})$, $c_{\text{even}\,k}^*=\frac{1}{2}\mathbb{E}_{q_{0}^*}(\beta^2)\mathbb{E}_{q_{N+1}^*}(\sigma^{-2})\prod_{i=1,i\neq k}^{N}\mathbb{E}(z_i^{-1})+1$ and $d_0^*=\frac{\mathbb{E}_{q_{0}^*}\left(\parallel y-X\beta \parallel_2^2\right)+\mathbb{E}_{q_{}^*}\left(\beta^T \Lambda^* \beta\right)+2d}{2}$. Then, we try to optimize the evidence lower boundary (ELBO)

\begin{equation}\label{varitational-L}
\mathcal{L}=\mathbb{E}_q \log f(u,\beta,\textbf{z}_1,\ldots,\textbf{z}_N,\sigma^2) - \mathbb{E}_q \log q(\beta) - \sum^N_{k=1}
\mathbb{E}_q \log q(\textbf{z}_k) - \mathbb{E}_q \log q(\sigma^2).
\end{equation}

To evaluate the values of $c_k$, we incorporate an Expectation Maximization (EM) algorithm to the Gibbs sampler by developing a Monte carlo (MCEM) method \citep{wei1990monte} using the expectation of the log complete-data likelihood

\begin{align}
\begin{split}
Q\left(\theta,\theta^\text{old}\right)
=&\,\,\, \sum_{k=1}^{N}\sum_{i=1}^{p} (-1)^{k+1} c_{k} \mathbb{E}_{c_{k}^\text{old}} \left[ \log({z}_{ki})|y
\right]
+c_{N}\log(\phi)
- \sum_{k=1}^{N} \log(\Gamma(c_{k}))+\mathfrak{C}\\
\end{split}
\end{align}

where $\mathfrak{C}$ donates all the terms not containing $c_1,c_2,\ldots,c_{N}$.

\begin{equation}
\Gamma^\prime(c_{k})=\sum_{i=1}^{p}(-1)^{k+1}  \mathbb{E}_{c_{k}^\text{old}} \left[ \log({z}_{ki})|y
\right]
+c_{N}\log(\phi) I(k=N)
\end{equation}

Similarly we can incorporate the EM algorithm to the variational inference method using the Mean Field Variational Bayes (MFVB) with

\begin{equation}
\mathbb{E}[\log(\textbf{z}_{\text{odd}})]=\frac{1}{2}\log\left(\frac{2}{c_k^*}\right)+\log\left(K_{c_k-1/2}^\prime(\sqrt{2c_k^*})\right)
\end{equation}
\begin{equation}
\mathbb{E}[\log(\textbf{z}_{\text{even}})]=\log(c_k^*)-\psi(c_k+\frac{1}{2})
\end{equation}

in (\ref{varitational-L}),where $K^\prime_x(\cdot)$ donates the derivative of the  bessel function of the second kind with respect to $x$.

\section{Model Consistency}

Suppose that $\beta_n^0$ is the true parameter of a high dimensional model with some non-zero components and let $\epsilon\sim N(0,\sigma^2 I)$, where $\sigma^2>0$ is known. To study high dimensionality in the context of posterior consistency, we will assume that as $n\rightarrow\infty$, then $p_n\rightarrow\infty$ and

\begin{enumerate}[leftmargin=1cm,label= (A\arabic*):]
\item   All the covariates are uniformly bounded by 1.
\item   $p_n\gg n$. (High dimensionality)
\item   $\exists \hat{p}(n,p_n)\in \mathbb{N}\ni \hat{p}(n,p_n) \succ s$.
        \newline
        $\exists \lambda_0 \ni \lambda_\text{min} (X^T_\xi X_\xi)\geq n\lambda_0, \forall |\xi|\leq \hat{p}(n,p_n)$,
        where $\xi \subset \{1,\ldots,p_n\}$.
\item   $s_n=o(n/ \log p_n)$.
\item   $\text{max}_j \{\beta_{0j}/\sigma_0\}\leq \gamma_3  E_n $ for some $\gamma_3 \in (0, 1)$, and a
    nondecreasing $E_n$ with respect to $n$.
\end{enumerate}

where  $\lambda_\text{min} (X^T_\xi X_\xi)$ denotes the minimum eigenvalue of $X^T_\xi X_\xi$. For simplicity, we will set $\sigma^2=1$. To demonstrate posterior consistency, we can derive the near-optimal contraction rates
by utilizing the result in \citep{song2017nearly} by the theorem

\begin{theorem}\citep{song2017nearly}\label{pythagorean}
If the Assumptions $[A1-A5]$ are satisfied and further assume a normal-compound gamma prior

 \begin{equation}\label{marginal-prior}
f(\beta) =
\int_0^\infty\ldots\int_0^\infty
\left[
\frac{\exp\left\{-{\beta^2}/{\left(2\sigma^2 z_1\right)}\right\}}{\sqrt{2\pi\sigma^2 z_1}}
\prod_{i=1}^{N}\frac{{z}_{i+1}^{c_{i}}}{\Gamma(c_i)}{z}_i^{c_{i}-1}\exp\left\{-{z}_{i}{z}_{i+1}\right\}
\right]
d{z}_1 \ldots d{z}_N
\end{equation}

with

\begin{equation}
1-\int_{-k_n}^{k_n}f(x)dx \leq {p_n}^{-(1+u)}
\end{equation}
\begin{equation}
-\log\left(\inf_{x\in [-E_n,E_n]} f(x)\right)=O\left(\log p_n\right)
\end{equation}

where $u>0$ and $k_n=\sqrt{s_n \log p_n/n}/p_n$ then

\begin{equation}
\mathbb{P}_{0}\left(\pi(\parallel\beta-\beta^{n}_0\parallel\geq c_1\sigma_0\epsilon_n|y_n) \geq
e^{-c_2n\epsilon_n^2}\right) \leq e^{-c_3n\epsilon_n^2}
\end{equation}

and

\begin{equation}
\mathbb{P}_{0}\left(\pi(\parallel\beta-\beta^{n}_0 \parallel_1 \geq c_1\sigma^{n}_0 \sqrt{s}\epsilon_n|y_n) \geq
e^{-c_2n\epsilon_n^2}\right) \leq e^{-c_3n\epsilon_n^2}
\end{equation}

for some $c_1,c_2,c_3>0$, where $\sigma_0$ and $\beta^{n}_0$ are the true model parameters and $\mathbb{P}_{0}$ is the probability measure underlying $y_n=X_n \beta_n^0 +\epsilon_n$.

\end{theorem}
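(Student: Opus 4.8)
The statement is the specialization to the normal--compound gamma (NCG) prior of the general posterior-contraction theorem for heavy-tailed shrinkage priors established in \citet{song2017nearly}, so the plan is to reduce the claim to that result by verifying that its hypotheses hold here. The general theorem isolates two structural requirements on the one-dimensional marginal prior $f$ in (\ref{marginal-prior}): a \emph{concentration} requirement, which forces almost all of the prior mass of each coordinate into a vanishing neighbourhood of the origin so that the many null coefficients are aggressively shrunk, and a \emph{tail} requirement, which prevents the density from decaying too fast so that the few genuinely large coefficients are not over-shrunk. The two displayed hypotheses are exactly these: since $k_n=\sqrt{s_n\log p_n/n}/p_n\to 0$ under (A4), the bound $1-\int_{-k_n}^{k_n}f(x)\,dx \le p_n^{-(1+u)}$ is the concentration condition, while $-\log\big(\inf_{x\in[-E_n,E_n]}f(x)\big)=O(\log p_n)$ is the lower-bound condition guaranteeing adequate mass up to the maximal signal size $E_n$ allowed by (A5).

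First I would set up the sieve-and-testing machinery in the spirit of Ghosal--Ghosh--van der Vaart, adapted to the regression model (\ref{ahmed_1}): stratify the parameter space by an effective model dimension, construct exponentially powerful tests separating $\beta_0^n$ from the complement of an $\epsilon_n$-ball, and combine them with a prior-mass lower bound on a Kullback--Leibler neighbourhood of the truth. Assumption (A1) bounds the design so that Kullback--Leibler divergences are comparable to Euclidean distances in $\beta$; (A3) supplies the restricted minimum-eigenvalue control needed both for identifiability and for the test construction; and (A4) fixes the sparsity--dimension trade-off that makes $\epsilon_n^2\asymp s_n\log p_n/n$ the correct contraction rate.

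The core of the argument is to show that the posterior concentrates on approximately sparse vectors. For a continuous shrinkage prior this is organized through a thresholding device: coordinates of magnitude below $k_n$ are treated as effectively null, and the concentration hypothesis ensures each coordinate falls below $k_n$ with prior probability at least $1-p_n^{-(1+u)}$, so that a union bound makes the prior---and hence the posterior---charge configurations with more than $O(s_n)$ appreciable coordinates only negligibly. The tail hypothesis simultaneously delivers the matching lower bound on the prior mass in a neighbourhood of $\beta_0^n$, keeping the posterior denominator from decaying too quickly. Dividing the numerator bound (tests together with the large-model control) by this denominator bound produces the stated exponential inequality in the $\ell_2$ norm, and (A3) then transfers it to the prediction and $\ell_1$ errors, giving the second inequality.

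I expect the principal obstacle to be the uniform control of the posterior mass of large models under genuine high dimensionality (A2: $p_n\gg n$): one must ensure that the geometric prior penalty on dimension supplied by the concentration hypothesis dominates the combinatorial growth of the $\binom{p_n}{k}$ candidate supports of size $k$, and that the separating tests stay exponentially powerful uniformly across this expanding collection. This is precisely where the sharp rate $p_n^{-(1+u)}$ with $u>0$ is needed. Finally, the genuinely paper-specific step---and the truly delicate quantitative point---is checking that the NCG marginal $f$ can realize these two conditions for a suitable choice of $N$, $\phi$, and the $c_i$, consistent with the pole-at-zero and heavy-tail behaviour illustrated in Figure \ref{figure for different N and a}.
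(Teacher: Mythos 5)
Your overall framing is correct---the theorem is the specialization of the general contraction result of \citep{song2017nearly}, whose hypotheses are exactly the two displayed prior conditions---but your proposal stops precisely where the paper's proof begins, so there is a genuine gap. The sieve-and-testing machinery you sketch (Ghosal--Ghosh--van der Vaart tests, thresholding at $k_n$, union bounds over the $\binom{p_n}{k}$ candidate supports) is the content of the cited theorem and is not re-proved in the paper at all; what the paper's appendix actually proves is the step you explicitly defer in your final sentence, namely that the NCG marginal $f$ in (\ref{marginal-prior}) \emph{satisfies} the concentration and flatness conditions for a suitable calibration of the hyperparameters. Although the statement formally lists those conditions as hypotheses, in the paper they are treated as claims about the NCG prior to be established, and that verification is the entire mathematical content here. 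It is not routine: it requires letting the first shape parameter depend on $n$, calibrated so that ${c_1}_n\lesssim k_n^2\,p_n^{-(1+u)}$ (note $k_n^2=s_n\log p_n/(n p_n^2)$, so ${c_1}_n$ must vanish polynomially in $p_n$), together with the restriction $c_2\in(1,\infty)$.

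Concretely, the paper first integrates out $z_2$ and collapses the first two gamma layers into a beta-prime--type kernel ${z}_1^{c_1-1}{z}_3^{c_2}/({z}_1+{z}_3)^{c_1+c_2}$ up to the factor $\Gamma({c_1}_n+c_2)/\left(\Gamma({c_1}_n)\Gamma(c_2)\right)\asymp {c_1}_n$ (a lemma borrowed from \citep{bai2018beta}, which is where $c_2>1$ enters); then the substitution $u=1/z_1$ together with the Gaussian tail bound $\Pr(|X|>k_n)\le 2e^{-k_n^2/(2z_1)}$ gives $1-\int_{-k_n}^{k_n}f(x)\,dx\lesssim ({c_1}_n/k_n^2)\,\phi^{(-1)^N c_2}\prod_{i=3}^{N}\Gamma\bigl((-1)^{i+1}c_2+c_i\bigr)/\Gamma(c_i)$, which is at most $p_n^{-(1+u)}$ by the choice of ${c_1}_n$. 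For the flatness condition, the substitution $\zeta=x^2/(2z_1)$ restricted to $\zeta\in[1,2]$ yields $\inf_{|x|\le E_n}f(x)\gtrsim {c_1}_n\,(x^2)^{-(c_2+1/2)}$, whence $-\log\inf_{x\in[-E_n,E_n]}f(x)\lesssim \log(1/{c_1}_n)+\log E_n\lesssim\log p_n$ using $\log E_n=O(\log p_n)$. None of this appears in your proposal: you correctly identify the two conditions as the crux but offer no argument that the compound-gamma mixture achieves them, no hyperparameter calibration, and no mention of the needed constraint on $c_2$; as written, you re-derive in outline the result you are permitted to cite while leaving unproved the only claim specific to this paper.
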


Alternatively, for $p_n/n\rightarrow 0$ we can use the main result in \citep{armagan2013posterior}
to show strong posterior consistency by assuming a second set of conditions as

\begin{enumerate}[leftmargin=1cm,label=(B\arabic*):]
\item   $p_n=o(n)$.
\item   Let $\Lambda_{n\,\text{max}}$ and $\Lambda_{n\,\text{min}}$ be the smallest and the largest singular
    values of $X_n$, respectively. Then, $0<\Lambda_\text{min}<\lim \inf_{n\rightarrow\infty}
    \Lambda_{n\,\text{min}}/\sqrt{n} \leq \lim \sup_{n\rightarrow\infty}
    \Lambda_{n\,\text{max}}/\sqrt{n}<\Lambda_\text{max}\leq \infty$.
\item   $\sup_{j=1,\ldots,p_n}|\beta_{nj}^{0}|\leq\infty$.
\item   $s_n=o(n / \log n)$.
\end{enumerate}

The assumptions are sufficiect to show strong posterior consistency for different priors in linear models
 as was shown in \citep{armagan2013posterior} using the following theorem

\begin{theorem}\citep{armagan2013posterior}\label{fff}
under assumptions $(B1)$ and $(B2)$, the posterior of $\beta_n$ under prior $\Pi_n(\beta_n)$ is strongly
consistent if

\begin{equation}
\Pi_n \left( \beta_n:\parallel \beta_n-\beta_n^0 \parallel < \frac{\Delta}{ n^{\rho/2} } \right) > \exp\left\{-dn\right\}
\end{equation}
for all $0<\Delta<\epsilon^2\Lambda_\text{min}^2/(48\Lambda_\text{max}^2)$ and $0<d<\epsilon^2
\Lambda_\text{min}^2/(32\sigma^2)-3\Delta\Lambda_\text{\textbf{max}}^2/(2\sigma^2)$ and some $\rho>0$.

\end{theorem}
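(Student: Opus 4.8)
\emph{Proof strategy.} The plan is to establish almost-sure convergence $\Pi_n(\{\beta_n:\|\beta_n-\beta_n^0\|>\epsilon\}\mid y_n)\to 0$ for every fixed $\epsilon>0$ through the familiar Schwartz-type ratio
\[
\Pi_n(\{\|\beta_n-\beta_n^0\|>\epsilon\}\mid y_n)=\frac{N_n}{D_n},\qquad N_n=\int_{\|\beta_n-\beta_n^0\|>\epsilon}R_n(\beta_n)\,d\Pi_n,\quad D_n=\int R_n(\beta_n)\,d\Pi_n,
\]
where $R_n(\beta_n)=f_{\beta_n}(y_n)/f_{\beta_n^0}(y_n)$. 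Writing $v=\beta_n-\beta_n^0$ and using $y_n=X_n\beta_n^0+\epsilon_n$ with Gaussian errors, the log-ratio has the exact form $\log R_n(\beta_n)=-\|X_nv\|^2/(2\sigma^2)+\epsilon_n^{T}X_nv/\sigma^2$. I would bound $D_n$ from below and $N_n$ from above on a single high-probability event, show the resulting ratio is at most $\exp\{-cn\}$ with $c>0$, and conclude by Borel--Cantelli, the complementary (bad) events having exponentially small, hence summable, probabilities, which upgrades convergence to almost sure.

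For the denominator I would restrict the integral to the shrinking ball $B_n=\{\|v\|<\Delta n^{-\rho/2}\}$, on which the hypothesis $\Pi_n(B_n)>\exp\{-dn\}$ applies. On $B_n$ the singular-value bounds of (B2) give $\|X_nv\|\le\Lambda_{n,\max}\|v\|<\Lambda_{\max}\sqrt{n}\,\|v\|$, so the quadratic term is controlled, and the cross term is handled by Gaussian concentration of $\|\epsilon_n\|$ together with Cauchy--Schwarz. Combining the two contributions (and using $\Delta<1$ and $n^{1-\rho}\le n$ to keep a clean closed form) yields a uniform lower bound $\inf_{\beta_n\in B_n}\log R_n(\beta_n)\ge-\tfrac{3\Delta\Lambda_{\max}^2}{2\sigma^2}n$, whence $D_n\ge\exp\{-(d+\tfrac{3\Delta\Lambda_{\max}^2}{2\sigma^2})n\}$ on the good event.

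For the numerator I would use the opposite singular-value bound, $\|X_nv\|\ge\Lambda_{n,\min}\|v\|>\Lambda_{\min}\sqrt{n}\,\epsilon$ on $\{\|v\|>\epsilon\}$, so the quadratic term is genuinely of order $n$. The one delicate point is the random cross term $\epsilon_n^{T}X_nv$: rather than paying for a covering of the growing-dimensional region, I would project $\epsilon_n$ onto the column space of $X_n$, note $\|P_{X_n}\epsilon_n\|^2/\sigma^2\sim\chi^2_{p_n}$, and complete the square, so that on the event $\|P_{X_n}\epsilon_n\|^2\le 3p_n\sigma^2$ one obtains $\log R_n(\beta_n)\le-\tfrac{\epsilon^2\Lambda_{\min}^2}{32\sigma^2}n$ uniformly, the leftover $O(p_n)$ term being negligible because (B1) gives $p_n=o(n)$. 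Integrating against the proper prior then gives $N_n\le\exp\{-\tfrac{\epsilon^2\Lambda_{\min}^2}{32\sigma^2}n\}$. Assembling the two estimates, the ratio is bounded by $\exp\{-(\tfrac{\epsilon^2\Lambda_{\min}^2}{32\sigma^2}-d-\tfrac{3\Delta\Lambda_{\max}^2}{2\sigma^2})n\}$, and the stated ranges $\Delta<\epsilon^2\Lambda_{\min}^2/(48\Lambda_{\max}^2)$ and $d<\epsilon^2\Lambda_{\min}^2/(32\sigma^2)-3\Delta\Lambda_{\max}^2/(2\sigma^2)$ are exactly what force this exponent to be strictly positive, the $\Delta$-range being precisely the condition that the admissible interval for $d$ is nonempty. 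The main obstacle is the uniform control of the cross term over the non-compact, high-dimensional set $\{\|v\|>\epsilon\}$ together with the bookkeeping that makes the numerator exponent strictly dominate the denominator exponent; the projection and complete-the-square device is what lets me avoid an entropy argument while still matching the constants in the statement.
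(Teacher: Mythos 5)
This theorem is not proved in the paper at all---it is imported verbatim from \citep{armagan2013posterior}---and your proposal correctly reconstructs that source's argument in its essentials: the Schwartz-type ratio $N_n/D_n$, the exact Gaussian log-likelihood ratio, the small-ball prior-mass lower bound on $D_n$ with the $3\Delta\Lambda_{\max}^2/(2\sigma^2)$ bookkeeping, the projection $\epsilon_n^{T}X_nv=(P_{X_n}\epsilon_n)^{T}X_nv\le\|P_{X_n}\epsilon_n\|\,\|X_nv\|$ with $\|P_{X_n}\epsilon_n\|^2/\sigma^2\sim\chi^2_{p_n}$ and $p_n=o(n)$ to handle the cross term without entropy, and Borel--Cantelli for the almost-sure conclusion. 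The single technical loose end is your good event $\{\|P_{X_n}\epsilon_n\|^2\le 3p_n\sigma^2\}$: its complement has probability of order $e^{-cp_n}$, which is summable only when $p_n$ grows at least logarithmically, so for bounded or slowly growing $p_n$ you should enlarge the threshold to, say, $\sigma^2\max\{p_n,(\log n)^2\}$ (still $o(n)$ under (B1)), after which the constants and the ranges for $\Delta$ and $d$ work out exactly as you state.
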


We will extended this theorem to show strong posterior consistency for our model using the following theorem

\begin{theorem}\label{sdaa}
Under assumptions $(B1)-(B4)$, the marginal prior given by (\ref{marginal-prior})
is strongly consistent posterior for $\Phi_n=C/(p_n n^\rho \log n)$ with $C,\rho>0$.
\end{theorem}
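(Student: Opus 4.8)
The plan is to verify the single prior-concentration hypothesis of Theorem \ref{fff}: under (B1)--(B2) it suffices to exhibit a radius $\Delta/n^{\rho/2}$ and a constant $d$ in the admissible range for which
\[
\Pi_n\!\left(\beta_n:\ \norm{\beta_n-\beta_n^0} < \frac{\Delta}{n^{\rho/2}}\right) > e^{-dn}.
\]
Since the marginal prior (\ref{marginal-prior}) is a product prior, $\Pi_n(\beta_n)=\prod_{j=1}^{p_n} f(\beta_{nj})$, the whole argument reduces to controlling one-dimensional integrals of $f$. The quantity $\Phi_n=C/(p_n n^\rho\log n)$ is the explicit lower bound I would produce for this ball probability; because it decays only polynomially in $n$ while $e^{-dn}$ decays exponentially, establishing $\Pi_n(\cdot)\ge \Phi_n$ immediately yields the hypothesis of Theorem \ref{fff} for large $n$.

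First I would pass from the Euclidean ball to a product set via the elementary inclusion
\[
\Big\{\beta_n:\ \max_j \abs{\beta_{nj}-\beta_{nj}^0} < \delta_n\Big\}\subset \Big\{\beta_n:\ \norm{\beta_n-\beta_n^0} < \sqrt{p_n}\,\delta_n\Big\},
\]
so that choosing $\delta_n=\Delta/(n^{\rho/2}\sqrt{p_n})$ gives
\[
\Pi_n\!\left(\norm{\beta_n-\beta_n^0}<\tfrac{\Delta}{n^{\rho/2}}\right)\ \ge\ \prod_{j=1}^{p_n}\int_{\beta_{nj}^0-\delta_n}^{\beta_{nj}^0+\delta_n} f(x)\,dx .
\]
Next I would split the $p_n$ factors into the $s_n$ coordinates with $\beta_{nj}^0\neq 0$ and the remaining $p_n-s_n$ null coordinates. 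For a nonzero coordinate, (B3) keeps $\beta_{nj}^0$ bounded, so the interval lies in a fixed compact set $[-M,M]$ on which $f$ is positive and (treating the origin, where $f$ only grows, separately) has a strictly positive minimum $c_0$; hence $\int f \ge 2\delta_n \inf_{[-M,M]} f \ge 2\delta_n c_0$. For a null coordinate the integral $\int_{-\delta_n}^{\delta_n} f$ is governed by the behaviour of (\ref{marginal-prior}) at the origin, which I would analyse through the compound-gamma integral exactly as in the origin/tail discussion of Figure \ref{figure for different N and a}: the integrable pole at $0$ forces most of the mass of $f$ near the origin, yielding a lower bound of the form $\int_{-\delta_n}^{\delta_n} f \ge \delta_n^{\,q}$ for some $q\in(0,1]$ determined by $N$ and $c_1$.

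Collecting the two pieces and taking logarithms, the prior mass is bounded below by
\[
\log \Pi_n(\cdot)\ \ge\ s_n\log(2c_0\delta_n)\ +\ (p_n-s_n)\,q\log\delta_n ,
\]
and with $\log\delta_n \asymp -\tfrac{\rho}{2}\log n-\tfrac12\log p_n$ the two terms are, respectively, of order $s_n\log n$ and $p_n\log n$. Here the assumptions carry the argument: (B4) gives $s_n=o(n/\log n)$, so the nonzero-coordinate term is $-o(n)$, while (B1) $p_n=o(n)$ together with the spike exponent $q$ keeps the null-coordinate term negligible relative to $n$; consequently $\log\Pi_n(\cdot)=-o(n)$, which exceeds $-dn$ for every fixed admissible $d$ once $n$ is large, and can be packaged as $\Pi_n(\cdot)\ge \Phi_n$. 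Appealing to Theorem \ref{fff} then delivers strong posterior consistency.

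The main obstacle is the two-sided analysis of the marginal density (\ref{marginal-prior}): I must show, uniformly in $n$, that $f$ is bounded below by a positive constant on compact sets (for the nonzero coordinates) and simultaneously that it concentrates enough mass in the shrinking interval $(-\delta_n,\delta_n)$ (for the null coordinates), all while the scale in $f$ is itself an $N$-fold compound-gamma mixture rather than a closed-form density. The second, quantitative crux is matching the rates: the null-coordinate contribution scales like $p_n\log n$, so the bound is tight precisely at the interface of (B1) and the spike exponent $q$, and it is this balance---not the appeal to Theorem \ref{fff}---that ultimately fixes the admissible $\Phi_n=C/(p_n n^\rho\log n)$.
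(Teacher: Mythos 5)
There is a genuine gap, and it starts with a misreading of what $\Phi_n$ is. In the paper, $\Phi_n$ (the $\phi_n$ of the proof) is the \emph{prior hyperparameter}: it is the constant $\textbf{z}_{N+1}=\phi$ terminating the compound-gamma hierarchy in (\ref{marginal-prior1}), and the content of Theorem \ref{sdaa} is that strong consistency holds when this hyperparameter is calibrated to shrink with $n$ at the rate $C/(p_n n^\rho \log n)$. You instead interpret $\Phi_n$ as a lower bound on the prior ball probability, and consequently your argument never uses the $n$-dependent scale at all --- it attempts to prove consistency for a \emph{fixed} prior. That cannot work, and the place it breaks is exactly your null-coordinate step: you bound each of the $p_n - s_n$ null coordinates by $\int_{-\delta_n}^{\delta_n} f \geq \delta_n^{\,q}$ with a fixed exponent $q>0$, giving a log-contribution of order $-q\,p_n\log n$. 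Under (B1) alone this is \emph{not} $o(n)$: take $p_n = n/\log\log n$, so $p_n\log n \gg n$. Your own closing remark that the bound is ``tight precisely at the interface of (B1) and the spike exponent $q$'' flags the problem but does not resolve it --- to make the product over $\approx p_n$ null coordinates negligible you would need per-coordinate mass $\exp\{-o(n/p_n)\}$, i.e.\ mass tending to $1$ inside the shrinking interval $(-\delta_n,\delta_n)$, which a fixed continuous prior cannot supply.

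The paper's proof avoids a per-coordinate product over the null coordinates entirely. It invokes inequality (A3) of \citet{armagan2013posterior}, which lower-bounds the ball probability by a single Chebyshev-type factor $\{1 - p_n n^\rho\, \mathbb{E}(\beta_{nj}^2)/\Delta^2\}$ covering all coordinates, times a product of interval probabilities over only the $s_n$ \emph{active} coordinates $j\in\Theta_n$. The closed-form second moment $\mathbb{E}(\beta^2)=\sigma^2 c_1 c_3\cdots c_N\,\phi^{(-1)^N}/\{(c_2-1)(c_4-1)\cdots(c_{N-1}-1)\}$ is proportional to the appropriate power of $\phi_n$, so the choice $\phi_n = C/(p_n n^\rho\log n)$ is precisely what drives the Chebyshev factor to one --- this is where the theorem's rate comes from, not from packaging a ball-probability bound. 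The active coordinates are then handled much as you propose for your nonzero coordinates, via the compact-interval density lower bound established in the second part of the proof of Theorem \ref{pythagorean}; but that bound degrades polynomially in $n$ through the factor $\{\sup_j(\beta^0_{nj})^2/\phi_n + \Delta^2/(p_n n^\rho \phi_n)\}^{-c_2-1/2}$, which is harmless only because it is raised to the power $s_n$ and (B4) gives $s_n\log n = o(n)$. So your treatment of the signal coordinates is in the right spirit, but the missing ideas are (i) the role of the shrinking hyperparameter $\phi_n$ and (ii) the aggregate second-moment bound replacing the doomed coordinatewise analysis at zero.
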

\begin{proof}
The proof is given in the Appendix.
\end{proof}

\newpage
\section{Simulation Studies \label{simData}}
For our simulation studies, we investigate the prediction accuracy of our proposed model, referred to as $NCG_{10}$ and compare its performance with  Lasso \citep{Tibshirani_1996}, the adaptive Lasso \citep[aLasso,][]{Zou_2006}, SCAD \citep{fan2001variable}, the elastic net \citep[Enet,][]{Zou_and_Hastie_2005}, the minimax concave penalty \citep[MCP,][]{zhang2010nearly}  the horseshoe estimator  \citep{carvalho2010horseshoe} and the Beta prime prior for scale parameters \citep[$NCG_{2}$,][]{bai2018beta}. The Bayesian estimates (the horseshoe, $NCG_{2}$  and our method) are posterior means using 13000 draws of the Gibbs sampler after 2000 draws as burn-in. The data were simulated from the true model (\ref{ahmed_1}). Each generated sample is partitioned into a training set with 20 observations and a testing set with 200 observations. Methods are fitted on the training observations and the mean squared error (MSE) is calculated on the testing set for each method. Then, we calculate the mean of MSE's for the generated samples based on 100 replications.
\subsection*{{Simulation 1} (sparse model)}
Here we consider  a sparse  model. We set $\bm\beta=(2, 0, 0, 1, 0, 0, 2, 0, 0, 0)$ and $\sigma^2\in\{1, 9, 25\}$.  The covariates are simulated from the multivariate normal distribution $N(0, \Sigma),$ where $\Sigma$ has one of the following covariance structures:
\begin{itemize}
  \item Case I:  $\Sigma$ is an identity matrix.
  \item Case II: $\Sigma_{ij}=0.5^{|i-j|}$  for all $ 1\leq i\leq j\leq p$.
  \item Case III: $\Sigma_{ij}=0.5$  whenever $i\neq j$, and  $\Sigma_{ii}= 1$ for all $ 1\leq i\leq j\leq p$.
\end{itemize}

The results are presented in Table 1 for Case I, Table 2 for Case II and Table 3 for Case III. These results show that, in terms of the MSE, the proposed  method (NCG$_{10}$) perform better than the other seven methods in general. It has the smallest
MSE in all cases  (Case I, Case II, and Case III).
Compared with the other seven methods, NCG$_{10}$ produces better false positive rate (FPR) and produces comparable or better false negative rate (FNR) in all cases.

\newpage
\begin{table}[ht]
\caption{Results for Simulation 1 (Case I). All results are averaged over 100 replications and their associated standard deviations (sd) are listed in the parentheses.}
\centering
\begin{tabular}{rrrrrrr}
  \hline
Methods&$\sigma^2$  & MSE (sd) & FPR (sd) & FNR (sd) \\
  \hline
NCG2&1 & 0.3318 (0.2347) & 0.1700  (0.4726) & 0.1200 (0.3266) \\
  NCG10&1 & \textbf{0.2953} (0.2202) & 0.1200  (0.3835) & 0.0500 (0.0589) \\
  MCP &1& 0.2998 (0.2818) & 0.9300  (1.4924) & 0.0200 (0.1407) \\
  SCAD &1& 0.3069 (0.2670) & 1.4700  (1.5920) & 0.0000 (0.0000) \\
  Enet &1& 0.7279 (0.5192) & 0.7800  (1.4466) & 0.0400 (0.1969) \\
  Horseshoe&1& 0.3337 (0.2148) & 0.1400 (0.4025) & 0.1100 (0.3145) \\
  Lasso &1& 0.3904 (0.2575) & 2.8400 (2.0485) & 0.0000 (0.0000) \\
  aLasso &1& 0.3094 (0.3079) & 0.6800  (1.1360) & 0.0400 (0.1969) \\{}\\
  NCG2 &9& 3.3210 (1.6915) & 0.2800 (0.7259) & 1.9000  (1.0200) \\
  NCG10 &9& \textbf{3.3046} (2.0605) & 0.0600 (0.3429) & 2.2700  (0.9519) \\
  MCP &9& 4.6948 (2.7229) & 1.6100 (2.1362) & 0.9500  (1.0577) \\
  SCAD &9& 4.5140 (2.5653) & 1.7800 (1.9047) & 0.7200  (0.9437) \\
  Enet &9& 5.9067 (2.9671) & 0.8600 (1.7351) & 1.6300  (1.2606) \\
  Horseshoe &9& 3.3059 (1.6899) & 0.0700  (0.3555) & 2.2600  (0.9705) \\
  Lasso &9& 3.8683 (2.5425) & 2.4200 (2.2301) & 0.5900  (0.9112) \\
  aLasso &9& 4.8923 (2.7963) & 0.9500 (1.7660) & 1.2000  (1.0347) \\ {}\\
  NCG2 &25& 6.2156 (3.1138) & 0.1400 (0.6199) & 2.5000  (0.7850) \\
  NCG10 &25& \textbf{6.1973} (2.4487) & 0.0600 (0.4221) & 1.6700  (0.5449) \\
  MCP &25& 8.0381 (3.0830) & 1.0400 (1.9945) & 2.0000  (1.1547) \\
  SCAD &25& 7.8252 (3.0894) & 1.1900 (1.8515) & 1.8600  (1.1893) \\
  Enet &25& 8.0952 (2.2614) & 0.4000 (1.3257) & 2.5100  (0.9374) \\
  Horseshoe &25& 6.2033 (2.6232) & 0.0600  (0.4221) & 2.8200  (0.5199) \\
  Lasso &25& 7.6352  (3.2592) & 1.7900  (2.3281) & 1.7200  (1.2314) \\
  aLasso &25& 7.8551  (4.1621) & 0.7100  (1.2251) & 2.0400  (0.9941) \\
   \hline
\end{tabular}
\end{table}

\newpage
\begin{table}[ht]
\caption{Results for Simulation 1 (Case II). }
\centering
\begin{tabular}{rrrrrrr}
  \hline
Methods&$\sigma^2$  & MSE (sd) & FPR (sd) & FNR (sd) \\
  \hline
NCG2 &1& 0.3587  (0.2063)         & 0.2200 (0.6902) & 0.1700 (0.3775) \\
  NCG10 &1& \textbf{0.3039}  (0.1782)        & 0.1000 (0.4381) & 0.0900 (0.1688) \\
  MCP &1&  0.3359  (0.2726)       & 1.0300 (1.7259) & 0.1100 (0.3145) \\
  SCAD &1&  0.3525  (0.2799)      & 1.5500 (1.9456) & 0.0700 (0.2564) \\
  Enet &1&  0.5870  (0.3823)      & 1.8200 (1.3437) & 0.0000 (0.0000) \\
  Horseshoe &1&  0.3340  (0.1774) & 0.1100  (0.3994) & 0.2600 (0.4408) \\
  Lasso &1&  0.3567  (0.2267)     & 3.2400  (1.7759) & 0.0000 (0.0000) \\
  aLasso &1&  0.3183 (0.2459)    & 0.9400  (1.4759) & 0.0800 (0.2727) \\{}\\
  NCG2 &3& 2.7638 (1.3834) & 0.1800  (0.5390) & 2.0600  (0.8507) \\
  NCG10 &3& \textbf{2.7144} (1.4640) & 0.0400 (0.2429) & 0.5400  (0.5759) \\
  MCP &3& 4.6676 (2.5626) & 1.6800  (2.1124) & 0.9300  (0.7688) \\
  SCAD &3& 4.4064 (2.0405) & 2.1300 (1.9209) & 0.6600  (0.6700) \\
  Enet &3& 5.3799 (3.7182) & 1.4300  (1.4443) & 0.8300  (0.8294) \\
  Horseshoe &3& 2.7479 (1.3851) & 0.0900 (0.3208) & 2.3300  (0.6825) \\
  Lasso &3& 2.9178 (1.6844) & 2.5600 (1.8053) & 0.3900  (0.5667) \\
  aLasso &3& 3.9332 (2.6644) & 1.2100  (1.4859) & 0.9300 (0.7420) \\{}\\
  NCG2 &5& 5.5666  (3.4173) & 0.1800  (0.5001) & 2.6700  (0.6365) \\
  NCG10 &5& \textbf{5.2849}  (2.9652) & 0.0500  (0.2190) & 1.8100 (0.4191) \\
  MCP &5& 10.4075  (5.8529) & 1.5200  (1.9974) & 1.7400 (0.8833) \\
  SCAD &5& 8.3466  (5.1968) & 1.9000  (1.5859) & 1.4600  (0.8339) \\
  Enet &5& 10.7077  (5.7116) & 0.9500 (1.2663) & 2.0000 (1.0541) \\
  Horseshoe &5& 5.5427  (3.0637) & 0.0600 (0.2778) & 2.7800 (0.4623) \\
  Lasso &5& 6.7542  (4.1350) & 2.4500 (1.7887) & 1.2000  (0.8409) \\
  aLasso &5& 9.0388 (5.1451) & 1.3900 (1.6569) & 1.7400  (0.8483) \\
   \hline
\end{tabular}
\end{table}

\newpage
\begin{table}[ht]
\caption{Results for Simulation 1 (Case III).}
\centering
\begin{tabular}{rrrrrrr}
  \hline
Methods&$\sigma^2$  & MSE (sd) & FPR (sd) & FNR (sd) \\
  \hline
NCG2 &1& 0.3684  (0.1756) & 0.2600 (0.7865) & 0.2400 (0.4292) \\
  NCG10 &1& \textbf{0.3211}  (0.1762) & 0.1400 (0.6516) & 0.0500 (0.4648) \\
  MCP &1& 0.3240  (0.2381) & 1.2000 (1.9488) & 0.0200 (0.1407) \\
  SCAD &1& 0.3318  (0.2343) & 1.7700 (2.0144) & 0.0100 (0.1000) \\
  Enet &1& 0.7019  (0.3880) & 1.6900 (1.5485) & 0.0000 (0.0000) \\
  Horseshoe &1& 0.3499  (0.1626) & 0.1500 (0.6256) & 0.2600 (0.4408) \\
  Lasso &1& 0.4167  (0.1967) & 3.1300  (1.9779) & 0.0000  (0.0000) \\
  aLasso &1& 0.3237 (0.2402) & 1.0300 (1.5599) & 0.0200 (0.1407) \\{}\\
  NCG2 &3& 3.0482 (1.4998) & 0.3100  (0.7344) & 1.9900 (0.9374) \\
  NCG10 &3& \textbf{3.0153} (1.7401) & 0.0800  (0.3387) & 0.9100 (0.8420) \\
  MCP &3& 4.8675 (2.9906) & 1.8000  (2.0792) & 0.9700 (1.0392) \\
  SCAD &3& 4.7276  (2.7093) & 2.3100  (2.0631) & 0.7800 (0.9383) \\
  Enet &3& 5.9578 (3.6717) & 1.1900  (1.6434) & 1.5800 (1.2806) \\
  Horseshoe &3& 2.9680  (1.4774) & 0.0700  (0.2932) & 2.3300 (0.8294) \\
  Lasso &3& 3.5033  (2.2545) & 3.0100  (1.9669) & 0.5000 (0.8103) \\
  aLasso &3& 4.7234  (3.0103) & 1.4100 (1.7984) & 1.1700 (1.0156) \\{}\\
NCG2 &5& \textbf{6.2475 } (4.1274) & 0.1900  (0.6771) & 2.6600 (0.6995) \\
  NCG10 &5& 6.5306 (3.5098) & 0.0700  (0.4324) & 1.8800 (0.4330) \\
  MCP &5& 10.1808  (6.1006) & 1.3500  (2.0762) & 1.8200 (1.1315) \\
  SCAD &5& 9.5614 (6.1075) & 1.7700  (2.1075) & 1.6700 (1.1197) \\
  Enet &5& 9.1984  (2.5258) & 0.5200  (1.3141) & 2.5000 (0.9156) \\
  Horseshoe &5& 6.0867 (3.7116) & 0.0600  (0.3429) & 2.8200 (0.4579) \\
  Lasso &5& 8.2867 (4.9088) & 2.2100  (2.3063) & 1.3900 (1.1712) \\
  aLasso &5& 9.1999 (5.0330) & 1.0500  (1.6291) & 1.9700 (1.0294) \\
   \hline
\end{tabular}
\end{table}

\newpage
\subsection*{{Simulation 2} (very sparse model)}
In this simulation study, we investigate the performance of our proposed method  in a very sparse model.
We set $\bm\beta=(1, 0, 0, 0, 0, 0, 0, 0, 0, 0)$ and $\sigma^2\in\{1, 9, 25\}$.  The covariates are simulated as the same as in Simulation 1.

The results are presented in Table 4 for Case I, Table 5 for Case II and Table 6 for Case III. Again, the proposed  method (NCG$_{10}$) perform better than the other seven methods in general. It has the smallest
MSE in most cases. Compared with the other seven methods, NCG$_{10}$ produces  comparable or better  false positive rate   and   false negative rate in all cases.

\newpage
\begin{table}[ht]
\caption{Results for Simulation 2 (Case I).}
\centering
\begin{tabular}{rrrrrrr}
  \hline
Methods&$\sigma^2$  & MSE (sd) & FPR (sd) & FNR (sd) \\
  \hline
NCG2 &1& 0.2285  (0.1914) & 0.2200 (0.5427) & 0.1400 (0.3487) \\
  NCG10 &1& \textbf{0.1856}  (0.2151) & 0.0800 (0.3075) & 0.0900 (0.1230) \\
  MCP &1& 0.2224  (0.2719) & 1.3500 (2.0019) & 0.0400 (0.1969) \\
  SCAD &1& 0.2415  (0.2952) & 2.1300 (2.2095) & 0.0500 (0.2190) \\
  Enet &1& 0.4422  (0.3437) & 0.6300 (1.4331) & 0.2200 (0.4163) \\
  Horseshoe &1& 0.2136  (0.1840) & 0.0800 (0.3075) & 0.2300  (0.4230) \\
  Lasso &1& 0.2624 (0.2474) & 2.5600 (2.6221) & 0.0500  (0.2190) \\
  aLasso &1& 0.2497  (0.2958) & 0.6800 (1.4831) & 0.0700  (0.2564) \\{}\\
  NCG2 &3& 1.4275 (0.9386) & 0.1600 (0.5635) & 0.9100  (0.2876) \\
  NCG10 &3& 1.0412 (0.6628) & 0.0400 (0.1969) & 0.8600 (0.1969) \\
  MCP &3& 1.5401 (1.5609) & 0.9100 (1.5960) & 0.6900   (0.4648) \\
  SCAD &3& 1.5115 (1.3905) & 1.3500 (1.9300) & 0.6600  (0.4761) \\
  Enet &3& \textbf{1.0177} (0.2889) & 0.3100  (1.1866) & 0.9100  (0.2876) \\
  Horseshoe &3& 1.2825 (0.8173) & 0.0400 (0.1969) & 0.9600  (0.1969) \\
  Lasso &3& 1.4042 (1.0672) & 1.6600 (2.1613) & 0.5600  (0.4989) \\
  aLasso &3& 1.3302 (1.2648) & 0.5800 (1.0841) & 0.7600 (0.4292) \\{}\\
  NCG2 &5& 3.2319  (3.3268) & 0.2000  (0.6513) & 0.9200 (0.2727) \\
  NCG10 &5& \textbf{1.8661} (2.2425) & 0.0400  (0.1969) & 0.8700 (0.1714) \\
  MCP &5& 3.3244 (4.8079) & 0.8000  (1.6937) & 0.8400 (0.3685) \\
  SCAD &5& 3.2317  (5.1025) & 1.1200  (1.9137) & 0.8000 (0.4020) \\
  Enet &5& 1.9996 (1.8991) & 0.2400  (1.0359) & 0.9300 (0.2564) \\
  Horseshoe &5& 2.7378  (2.6782) & 0.0300  (0.1714) & 0.9700 (0.1714) \\
  Lasso &5& 2.6915  (3.3954) & 1.3800  (2.1686) & 0.7600 (0.4292) \\
  aLasso &5& 2.5437  (3.9197) & 0.5900  (1.1555) & 0.8500 (0.3589) \\
   \hline
\end{tabular}
\end{table}

\newpage
\begin{table}[ht]
\caption{Results for Simulation 2 (Case II).}
\centering
\begin{tabular}{rrrrrrr}
  \hline
Methods&$\sigma^2$  & MSE (sd) & FPR (sd) & FNR (sd) \\
  \hline
NCG2 &1& 0.2032  (0.1427) & 0.1700 (0.6039) & 0.3900 (0.4902) \\
  NCG10 &1& \textbf{0.1768} (0.1637) & 0.0100 (0.1000) & 0.0900 (0.5024) \\
  MCP &1& 0.2415 (0.2768) & 0.8500 (1.4240) & 0.0500 (0.2190) \\
  SCAD &1& 0.2236  (0.2404) & 1.6500 (1.7313) & 0.0100 (0.1000) \\
  Enet &1& 0.4617  (0.3507) & 0.6400 (1.1238) & 0.2300 (0.4230) \\
  Horseshoe &1& 0.1927 (0.1345) & 0.0000 (0.0000) & 0.4400 (0.4989) \\
  Lasso &1& 0.2006 (0.1698) & 2.3800 (2.1498) & 0.0100 (0.1000) \\
  aLasso &1& 0.2373  (0.3005) & 0.6600 (1.3198) & 0.0900 (0.2876) \\{}\\
  NCG2 &3& 1.1350 (1.1208) & 0.1500  (0.5389) & 0.9600  (0.1969) \\
  NCG10 &3& \textbf{0.7752} (0.4526) & 0.0000  (0.0000) & 0.8200  (0.0000) \\
  MCP &3& 1.7061 (1.8549) & 0.7700  (1.5233) & 0.8400  (0.3685) \\
  SCAD &3& 1.6668 (1.9736) & 1.0500  (1.6840) & 0.8100  (0.3943) \\
  Enet &3& 1.0071 (0.3125) & 0.2200  (0.8113) & 0.9300  (0.2564) \\
  Horseshoe &3& 1.0181 (0.7426) & 0.0000  (0.0000) & 1.0000  (0.0000) \\
  Lasso &3& 1.3780 (1.3094) & 1.4600  (2.2627) & 0.7200  (0.4513) \\
  aLasso &3& 1.1627 (1.1377) & 0.4800 (1.1054) & 0.8100  (0.3943) \\{}\\
CG2 &5& 2.4748  (3.1561) & 0.0800  (0.4188) & 0.9600  (0.1969) \\
  NCG10 &5& 1.5892  (2.6613) & 0.0400  (0.2429) & 0.8500  (0.1000) \\
  MCP &5& 3.0585  (5.1631) & 0.5800  (1.3041) & 0.8500  (0.3589) \\
  SCAD &5& 3.4144  (5.5803) & 0.9600  (1.9588) & 0.8200  (0.3861) \\
  Enet &5& \textbf{1.2142}  (1.0883) & 0.1600  (0.7208) & 0.9100  (0.2876) \\
  Horseshoe &5& 2.2256  (2.8491) & 0.0400  (0.2429) & 0.9900  (0.1000) \\
  Lasso &5& 2.5749  (3.4556) & 1.3400  (2.0803) & 0.7600  (0.4292) \\
  aLasso &5& 1.9558  (3.1287) & 0.4500  (0.7961) & 0.8300  (0.3775) \\

   \hline
\end{tabular}
\end{table}

\newpage
\begin{table}[ht]
\caption{Results for Simulation 2 (Case III).}
\centering
\begin{tabular}{rrrrrrr}
  \hline
Methods&$\sigma^2$  & MSE (sd) & FPR (sd) & FNR (sd) \\
  \hline
NCG2 &1& 0.2403   (0.1694) & 0.2400  (0.5341) & 0.1400 (0.3487) \\
  NCG10 &1& \textbf{0.1920} (0.1897) & 0.0900 (0.2876) & 0.0900 (0.4560) \\
  MCP &1& 0.2250 (0.2446) & 1.4600  (2.0222) & 0.0200 (0.1407) \\
  SCAD &1& 0.2361 (0.2463) & 2.1300 (2.0580) & 0.0100 (0.1000) \\
  Enet &1& 0.4537 (0.2998) & 0.7000 (1.2831) & 0.1700 (0.3775) \\
  Horseshoe &1& 0.2271 (0.1690) & 0.1100  (0.3451) & 0.2500 (0.4352) \\
  Lasso &1& 0.2503 (0.1888) & 2.6600 (2.5593) & 0.0000 (0.0000) \\
  aLasso &1& 0.2353 (0.2598) & 0.8000 (1.2144) & 0.0500 (0.2190) \\{}\\
NCG2 &3& 1.2270  (1.0247) & 0.1400 (0.5689) & 0.9200 (0.2727) \\
  NCG10 &3& \textbf{0.9459}  (0.5935) & 0.0200 (0.1407) & 0.7600 (0.1969) \\
  MCP &3& 1.6726  (1.8068) & 0.9300 (1.8327) & 0.6800 (0.4688) \\
  SCAD &3& 1.6060  (1.8963) & 1.1300 (1.9051) & 0.6600 (0.4761) \\
  Enet &3& 1.0271  (0.5398) & 0.2400 (0.8776) & 0.8800 (0.3266) \\
  Horseshoe &3& 1.1431  (0.8349) & 0.0200  (0.1407) & 0.9600 (0.1969) \\
  Lasso &3& 1.4559  (1.4759) & 1.5600 (2.4176) & 0.5800 (0.4960) \\
  aLasso &3& 1.3460  (1.1822) & 0.5600 (1.1748) & 0.7400 (0.4408) \\{}\\
NCG2 &5& 2.6455 (2.8117) & 0.0800      (0.4645) & 0.9800  (0.1407) \\
  NCG10 &5& 1.5017 (1.9620) & 0.0400  (0.3153) & 0.7100  (0.0000) \\
  MCP &5& 2.9577 (4.4169) & 0.8300  (1.7925) & 0.7800  (0.4163) \\
  SCAD &5& 2.4440 (3.7348) & 0.9000  (1.5472) & 0.8300  (0.3775) \\
  Enet &5& \textbf{1.1314} (1.1063) & 0.1400  (0.8167) & 0.9600 (0.1969) \\
  Horseshoe &5& 2.2756 (2.4153) & 0.0400  (0.3153) & 1.0000 (0.0000) \\
  Lasso &5& 2.5601 (3.3173) & 1.4700  (2.2539) & 0.7400  (0.4408) \\
  aLasso &5& 1.7135  (2.5835) & 0.5700  (1.1304) & 0.8300  (0.3775) \\
   \hline
\end{tabular}
\end{table}

\newpage
\subsection*{{Simulation 3} (Difficult Example)}
Here,  we consider a difficult example which is provided in the adaptive lasso paper \citep{zou2006adaptive}. This example requires setting $\bm{\beta}=(5.6, 5.6, 5.6, 0)'$, the correlation coefficient between $x_i$ and $x_j$ is equal to -0.39 for $i<j<4$  and the correlation coefficient between $x_i$ and $x_4$ is equal to 0.23 for $i<4$.
The results are summarized  in Table 7. The results show that  the proposed  method (NCG$_{10}$) perform better than (NCG$_{2}$) for $n= 20, 50, 100, 150, 200$ and 250. It has the smallest MSE  better  false positive rate in all situations. The convergence of the Gibbs sampler  was evaluated  by trace plots of the simulated draws. Figures \ref{ncg2} and \ref{ncg10} displays the trace plots for NCG$_{2}$ and NCG$_{10}$, respectively. It can be seen that the trace plots of   NCG$_{10}$ establish
good mixing property of the proposed Gibbs sampler. Additionally, The histograms of NCG$_{10}$ based on the posterior draws reveal that the
conditional distributions for the regression coefficients are the desired stationary distributions.

\begin{table}[ht]
\caption{Results for Simulation 3.}
\centering
\begin{tabular}{rrrrrrr}
  \hline
Methods&$n$  & MSE (sd) & FPR (sd) & FNR (sd) \\
  \hline
NCG2 & 20 & 0.2059 (0.1464) & 0.0500 (0.2190) & 0.0000 (0.0000) \\
  NCG10 & 20  & \textbf{0.1915} (0.1354) & 0.0300 (0.1714) & 0.0000 (0.0000) \\ {} \\
NCG2 & 50  & 0.0923 (0.0659) & 0.0600 (0.2387) & 0.0000 (0.0000) \\
  NCG10 & 50  & \textbf{0.0881} (0.0638) & 0.0300 (0.1714) & 0.0000 (0.0000) \\ {}\\
NCG2 & 100  & 0.0368 (0.0232) & 0.0500 (0.2190) & 0.0000 (0.0000) \\
  NCG10 & 100  & \textbf{0.0347} (0.0225) & 0.0400 (0.1969) & 0.0000 (0.0000) \\ {} \\
NCG2 & 150  & 0.0275 (0.0191) & 0.0600 (0.2387) & 0.0000 (0.0000) \\
  NCG10 & 150  & \textbf{0.0261} (0.0186) & 0.0500 (0.2190) & 0.0000 (0.0000) \\ {} \\
NCG2 & 200  & 0.0204 (0.0123) & 0.0400 (0.1969) & 0.0000 (0.0000) \\
  NCG10 & 200  & \textbf{0.0195} (0.0122) & 0.0300 (0.1714) & 0.0000 (0.0000) \\ {} \\
NCG2 & 250  & 0.0166 (0.0136) & 0.0800 (0.2727) & 0.0000 (0.0000) \\
  NCG10 & 250  & \textbf{0.0160} (0.0134) & 0.0500 (0.2190) & 0.0000 (0.0000) \\
   \hline
\end{tabular}
\end{table}

\begin{figure}[h]
\begin{center}
    \includegraphics[width=120mm]{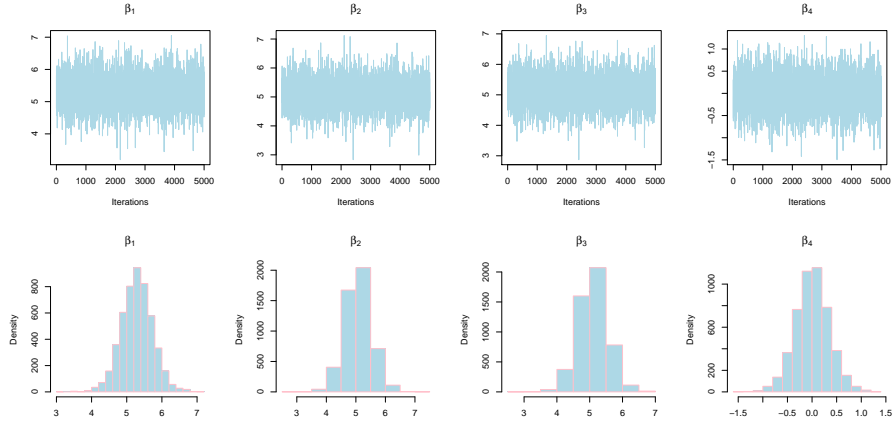}
    \caption{Trace plots and histograms of Simulation 3 using NCG$_{2}$.\label{ncg2}}
    \label{figure for different N and a}
\end{center}
\end{figure}

\begin{figure}[h]
\begin{center}
    \includegraphics[width=120mm]{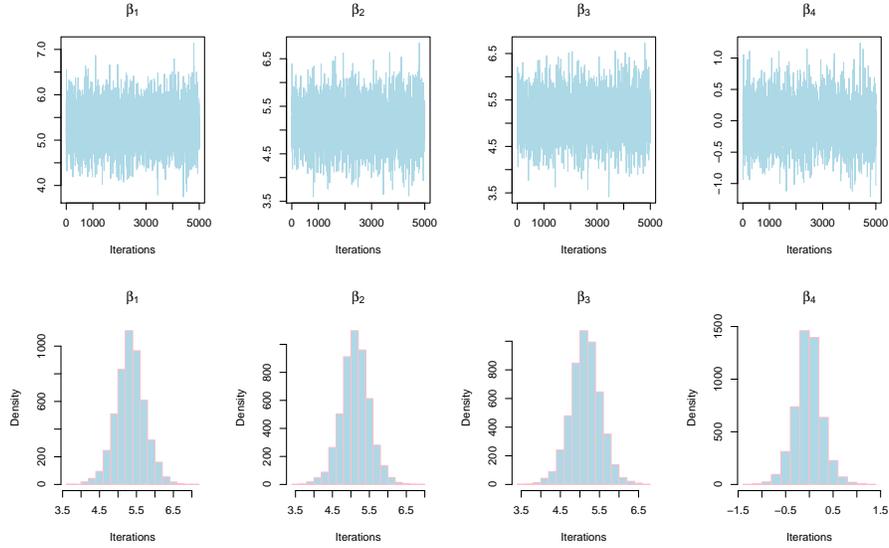}
    \caption{Trace plots and histograms of Simulation 3 using NCG$_{10}$. \label{ncg10}}
    \label{figure for different N and a}
\end{center}
\end{figure}

\newpage
\section{A real data example \label{realData}}
In this section, we compare the performance of the eight methods in Section \ref{simData}, NCG$_2$, NCG$_{10}$,
 MCP, SCAD, Enet, Horseshoe, Lasso, aLasso, on the prostate cancer  data \citep{stamey1989prostate}. This data set was used  for illustration in previous studies  \citep[see,][]{zou2006adaptive,Park_and_Casella,mallick2014new},  where the dependent  variable  is the logarithm of prostate-specific antigen and the number of covariates are eight clinical measures donated by lcavol = the logarithm of cancer volume, lweight = the logarithm of prostate weight, age, lbph = the logarithm of the amount of benign prostatic hyperplasia, svi = seminal vesicle invasion, lcp = the logarithm of capsular penetration, gleason = the Gleason score, and pgg45 = the percentage Gleason score 4 or 5. We add  twelve more predictor  noise standard normal random variables, $x_{9}, \cdots, x_{20}$.
   To analyze the data, we follow the methods in \citep{Zou_and_Hastie_2005} and \citep{mallick2014new} by dividing it into a training set with 67 observations  and a testing set with 30 observations. The results are summarized in Table \ref{realdataTable}. We can see clearly, that the proposed method performs better than the  other methods.

\begin{table}[ht]
\caption{Results for the real data.  \label{realdataTable} }
\centering
\begin{tabular}{rrrrrrr}
  \hline
Methods  & MSE (sd) & FPR (sd)  \\
NCG2 & 18.5125 (0.3134) & 0.1200 (0.4798)  \\
  NCG10 & 18.4303 (0.2142) & 0.0600 (0.3136) \\
  MCP & 19.4068 (0.6941) & 0.1800 (0.3881)  \\
  SCAD & 19.4802 (0.6962) & 0.9400 (0.2399)  \\
  Enet & 20.6716 (0.7222) & 0.3400 (0.4785)  \\
  Horseshoe & 18.7082 (0.2732) & 0.0600 (0.3136) \\
  Lasso & 18.7504 (0.4828) & 1.0000 (0.0000)  \\
  aLasso & 19.2235 (0.8163) & 0.4600 (0.5035)  \\
   \hline
\end{tabular}
\end{table}

\newpage
\section{Concluding Remarks}

In this paper, we have studied the properties of the normal prior with a compound gamma scale mixture. Then, we  proposed a
Gibbs sampler and a Variational Bayes method to study the model predictions. Additionally, we incorporated the latter two methods to an EM method to evaluate the corresponding model hyperparameters. Furthermore, we studied posterior consistency of our model using two different sets of conditions for $p_n=o(n)$ and $p_n\gg n$. Finally, we illustrated our model using simulated and real data examples.

\section{Appendix}

\begin{proof}[Proof of Proposition \ref{equivalence-proposition}.]

The proof can be trivially seen using a simple change of variable of integration through $z_i\rightarrow x^{(-1)^{i+1}} z_{}^{(-1)^i}$

\begin{align}
\begin{split}
\pi(x) &= \int_0^\infty\ldots\int_0^\infty
\exp\{-z_2\}z_2^{c_1-1}
\left[
\prod_{i=2}^{M-1}\exp\left\{-\left(\frac{z_{i+1}}{z_i}\right)^{(-1)^{i+1}}\right\}
\left(\frac{z_{i+1}}{z_i}\right)^{(-1)^{i+1} c_i-1}\frac{1}{z_i}
\right]
\\&
\qquad\qquad\qquad\qquad\qquad
\times
\exp\left\{-\phi\left(\frac{z_1}{z_N}\right)^{(-1)^{N+1}}\right\}
\left(\frac{z_1}{z_{N}}\right)^{(-1)^{N+1} c_{N}-1}\frac{\phi ^{c_N}}{z_N}
d{z}_2 \ldots d{z}_N
\\&= \int_0^\infty\ldots\int_0^\infty
\mathcal{G}(z_2,c_1,1)\frac{\mathcal{IG}(z_3/z_3,c_2,1)}{z_2}\frac{\mathcal{G}(z_3/z_4,c_3,1)}{z_3}
\\&
\qquad\qquad\qquad\qquad\qquad\qquad\qquad\qquad\qquad\qquad\qquad
\ldots \times\frac{\mathcal{AG}(z_1/z_N,N,c_N,\phi)}{z_N}
d{z}_2 \ldots d{z}_N
\end{split}
\end{align}

\end{proof}

\begin{proof}[Proof of Theorem \ref{pythagorean}.]

\begin{lemma}
If ${c_1}_n\rightarrow 0$ and $c_2\in (1,\infty)$ as $n\rightarrow\infty$, then
$\frac{\Gamma({c_1}_n+c_2)}{\Gamma({c_1}_n)\Gamma(c_2)}\asymp {c_1}_n$.
\end{lemma}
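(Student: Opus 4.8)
The plan is to reduce the whole claim to the recursion identity $\Gamma(z+1)=z\Gamma(z)$ together with the continuity of the Gamma function. The key observation is that the only source of blow-up in the ratio is the factor $1/\Gamma({c_1}_n)$, since $\Gamma$ has a simple pole at the origin while both $\Gamma({c_1}_n+c_2)$ and $\Gamma(c_2)$ stay bounded and bounded away from zero as ${c_1}_n\to 0$.

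First I would apply the functional equation at $z={c_1}_n$ to write $\Gamma({c_1}_n)={c_1}_n^{-1}\Gamma({c_1}_n+1)$, so that $1/\Gamma({c_1}_n)={c_1}_n/\Gamma({c_1}_n+1)$. Substituting this into the quantity of interest gives
\[
\frac{\Gamma({c_1}_n+c_2)}{\Gamma({c_1}_n)\Gamma(c_2)}={c_1}_n\cdot\frac{\Gamma({c_1}_n+c_2)}{\Gamma(c_2)\,\Gamma({c_1}_n+1)}.
\]
This already isolates the factor ${c_1}_n$ that we want to match, so it remains only to control the second factor.

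Next I would show that the second factor is bounded above and below by positive constants for all large $n$. Since ${c_1}_n\to 0$ and $c_2\in(1,\infty)$ is fixed, the arguments ${c_1}_n+c_2\to c_2$ and ${c_1}_n+1\to 1$ both stay inside a compact subset of $(0,\infty)$ on which $\Gamma$ is continuous and strictly positive. By continuity, $\Gamma({c_1}_n+c_2)\to\Gamma(c_2)$ and $\Gamma({c_1}_n+1)\to\Gamma(1)=1$, so the second factor converges to $\Gamma(c_2)/(\Gamma(c_2)\cdot 1)=1$. In particular it is eventually trapped in, say, $[1/2,2]$, which delivers the $\lesssim$ and $\gtrsim$ bounds simultaneously.

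Combining the two observations, the ratio equals ${c_1}_n$ times a factor tending to $1$, hence $\Gamma({c_1}_n+c_2)/(\Gamma({c_1}_n)\Gamma(c_2))\asymp {c_1}_n$, as required. I do not anticipate any genuine obstacle: the argument is a one-line manipulation of the Gamma recursion followed by a continuity remark, and the hypothesis $c_2>1$ is used only to keep $c_2$ safely away from the pole of $\Gamma$ at the origin (any fixed $c_2>0$ would in fact suffice for the limit, so the role of this lemma in Theorem~\ref{pythagorean} is simply to quantify the small-shape behaviour of the normalizing constant of $\mathcal{G}({c_1}_n,\cdot)$).
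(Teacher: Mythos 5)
Your proof is correct and takes essentially the same route as the source the paper relies on: the paper itself omits the details, deferring to \citep{bai2018beta}, where the identical reduction --- writing $\Gamma({c_1}_n)=\Gamma({c_1}_n+1)/{c_1}_n$ via the recursion and then invoking continuity and positivity of $\Gamma$ at ${c_1}_n+c_2\to c_2$ and ${c_1}_n+1\to 1$ --- gives the two-sided bound. Your closing parenthetical is also accurate: the lemma itself needs only a fixed $c_2>0$, and the stronger hypothesis $c_2\in(1,\infty)$ earns its keep elsewhere in the paper (in the tail-probability bound, where $b\in(1,\infty)$ is invoked, and in the finiteness of $\mathbb{E}(\beta^2)$), not in this asymptotic.
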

\begin{proof}
The details can be found in \citep{bai2018beta}.
\end{proof}

By the symmetry of the probability for a single density function given by (\ref{marginal-prior})

\begin{align}
\begin{split}
1-\int_{-k_n}^{k_n}f(x)dx &= 2\int_{k_n}^{\infty}f(x)dx\\
&\leq 2\int_0^\infty\ldots\int_0^\infty
\left[
\frac{\exp\left\{-k_n^2/{2 z_1}\right\}}{\sqrt{2\pi z_1}}
\prod_{i=1}^{N}\frac{{z}_{i+1}^{c_{i}}}{\Gamma(c_i)}{z}_i^{c_{i}-1}\exp\left\{-{z}_{i}{z}_{i+1}\right\}
\right]
d{z}_1 \ldots d{z}_N\\
&\leq 2 c_{1_n}  \int_0^\infty \ldots\int_0^\infty\int_0^\infty
\bigg[
\frac{\exp\left\{-k_n^2/{2 z_1}\right\}}{\sqrt{2\pi z_1}}
\frac{{z}_1^{c_1-1}{z}_3^{c_2}}{({z}_1+{z}_3)^{c_1+1}}\\
&\qquad\qquad\qquad\qquad\qquad\qquad\times
\prod_{i=3}^{N}\frac{{z}_{i+1}^{c_{i}}}{\Gamma(c_i)}{z}_i^{c_{i}-1}\exp\left\{-{z}_{i}{z}_{i+1}\right\}
\bigg]
d{z}_1d{z}_3\ldots d{z}_N\\
&\leq 2 c_{1_n}  \int_0^\infty \ldots\int_0^\infty\int_0^\infty
\bigg[
\frac{\exp\left\{-k_n^2 u/2\right\}}{({z}_3u+1)^{c_1+1}}\\
&\qquad\qquad\qquad\qquad\qquad\qquad\times
{z}_3^{c_2}\prod_{i=3}^{N}\frac{{z}_{i+1}^{c_{i}}}{\Gamma(c_i)}{z}_i^{c_{i}-1}\exp\left\{-{z}_{i}{z}_{i+1}\right\}
\bigg]
du d{z}_3\ldots d{z}_N\\
&\leq \frac{8 c_{1_n}}{k_n^2} \int_0^\infty \ldots\int_0^\infty
\bigg[
{z}_3^{c_2}\prod_{i=3}^{N}\frac{{z}_{i+1}^{c_{i}}}{\Gamma(c_i)}{z}_i^{c_{i}-1}\exp\left\{-{z}_{i}{z}_{i+1}\right\}
\bigg]
d{z}_3\ldots d{z}_N\\
&= \frac{8c_{1_n}}{k_n^2}\frac{\Gamma[c_2+c_3]}{\Gamma (c_3)}
\int_0^\infty\ldots\int_0^\infty
\bigg[
z_4^{-c_{2}}\prod_{i=4}^{N}\frac{{z}_{i+1}^{c_{i}}}{\Gamma(c_i)}{z}_i^{c_{i}-1}\exp\left\{-{z}_{i}{z}_{i+1}\right\}
\bigg]
d{z}_4\ldots d{z}_N\\
\end{split}
\end{align}

where we have used a change of variable $u=1/z_1$ in the third inequality. By repeating the last step, we get

\begin{align}
\begin{split}
1-\int_{-k_n}^{k_n}f(x)dx \leq \frac{8c_{1_n}\phi^{(-1)^{N}c_2}}{k_n^2 } \prod_{i=3}^{N}\frac{\Gamma[(-1)^{i+1}c_2+c_i]}{\Gamma [c_i]}\\
\end{split}
\end{align}

where we have used the concentration inequality, $\Pr(|X|>k_n)\leq 2 e^{-k_n^2 /{2z_1}}$, and the assumption $b\in(1,\infty)$. For the second condition we have

\begin{align}
\begin{split}
\inf_{x\in [-E_n,E_n]} f(x)
&\gtrsim c_{1_n}  \int_0^\infty \ldots\int_0^\infty\int_0^\infty
\bigg[
\frac{{z}_1^{c_1-3/2}{z}_3^{c_2}\exp\left\{-{x^2}/{\left(2z_1\right)}\right\}}{({z}_1+{z}_3)^{c_1+c_2}}\\
&\qquad\qquad\qquad\qquad\times
\prod_{i=3}^{N}\frac{{z}_{i+1}^{c_{i}}}{\Gamma(c_i)}{z}_i^{c_{i}-1}\exp\left\{-{z}_{i}{z}_{i+1}\right\}
\bigg]
d{z}_1d{z}_3\ldots d{z}_N\\
&\geq
c_{1_n} (x^2)^{c_1-\frac{1}{2}}\int_0^\infty \ldots\int_0^\infty\int_1^2
\biggr[\prod_{i=3}^{N}\frac{{z}_{i+1}^{c_{i}}}{\Gamma(c_i)}{z}_i^{c_{i}-1}\exp\left\{-{z}_{i}{z}_{i+1}\right\}\\
&\times
\exp\left(-\zeta \right)\left(\frac{\zeta}{x^2+2\zeta {z}_3}\right)^{c_2-1/2}\left(\frac{1}{x^2+2\zeta {z}_3}\right)^{c_1+1/2}
\biggr]
d\zeta d{z}_3\ldots d{z}_N\\
&\gtrsim
c_{1_n} (x^2)^{c_1-\frac{1}{2}}\int_0^\infty \ldots\int_0^\infty\int_1^2
\biggr[
z_3^{c_2}\prod_{i=3}^{N}\frac{{z}_{i+1}^{c_{i}}}{\Gamma(c_i)}{z}_i^{c_{i}-1}\exp\left\{-{z}_{i}{z}_{i+1}\right\}\\
&\qquad\qquad\times
\left(\frac{1}{x^2+2{z}_3}\right)^{c_2-1/2}\left(\frac{1}{x^2+4{z}_3}\right)^{c_1+1/2}
\biggr]
d{z}_3 d{z}_4\ldots d{z}_N\\
&\gtrsim
\frac{c_{1_n}}{\left(x^2\right)^{c_2+1/2}}\int_0^\infty \ldots\int_0^\infty
\biggr[
z_4^{c_3}\prod_{i=4}^{N}\frac{{z}_{i+1}^{c_{i}}}{\Gamma(c_i)}{z}_i^{c_{i}-1}\exp\left\{-{z}_{i}{z}_{i+1}\right\}
\biggr]
d{z}_4\ldots d{z}_N\\
\end{split}
\end{align}

using ${z}_i^{c_{i}+c_{i+1}-1} \geq 1, \forall z_i \in [1,2]$ and where we have used $\zeta=\frac{x^2}{2{z}_1}$ in the second inequality. Using assumptions $\log E_n = O(\log p_n)$ and $c_{1_n} \lesssim k_n^2 p_n^{-(1+u)}$ for $u>0$ then

\begin{equation}
-\log\left(\inf_{x\in [-E_n,E_n]}f(x)\right)\lesssim \log p_n
\end{equation}

\end{proof}

\begin{proof}[Proof of Theorem \ref{sdaa}]

Using equation $(A3)$ in \citep{armagan2013posterior}

\begin{align}
\begin{split}
\Pi_n &\left( \beta_n:\parallel \beta_n-\beta_n^0 \parallel < \frac{\Delta}{ n^{\rho/2} } \right) \geq
\left\{ 1-\frac{p_n n^\rho \mathbb{E}(\beta^2_{n_j})}{\Delta^2} \right\}
 \\& \prod_{j\in \Theta_n} \left\{ \Pi_n\left( \beta_{n_j} : |\beta_{n_j}-\beta^0_{n_j}| <\frac{\Delta}{\sqrt{p_n} n^{\rho/2}} \right) \right\} \\
\end{split}
\end{align}

Using $\mathbb{E}(\beta^2)=\frac{\sigma^2 c_1 c_3 c_5 \ldots c_N \phi^{(-1)^N}}{(c_2-1)(c_4-1)\ldots (c_{N-1}-1)}$ and the second part of the proof of Theorem \ref{pythagorean}, then

\begin{align}
\begin{split}
\Pi_n &\left( \beta_n:\parallel \beta_n-\beta_n^0 \parallel < \frac{\Delta}{ n^{\rho/2} } \right) \geq
 \left\{ 1-\frac{p_n n^\rho }{\Delta^2} \frac{\sigma^2 c_1 c_3 c_5 \ldots c_N \phi^{(-1)^N}}{(c_2-1)(c_4-1)\ldots (c_{N-1}-1)}   \right\} \\
&\times
\left[
\frac{2\Delta c_{1_n}}{\sqrt{p_n}n^{\rho/2}}
\left\{   \frac{\sup_{j\in\mathcal{A}_n} (\beta^{0}_{nj})^2}{\phi_n} + \frac{\Delta^2}{p_n n^\rho \phi_n}   \right\}^{-c_2-1/2}
 \right]^{s_n}\\
\end{split}
\end{align}
\end{proof}

letting $\phi_n=C/(p_n n^\rho \log n)$ for some $C>0$ and taking the negative log of both side, we can easily see that $\Pi_n \left( \beta_n:\parallel \beta_n-\beta_n^0 \parallel < \frac{\Delta}{ n^{\rho/2} } \right)<dn, \forall d>0$ which completes the prove.

\newpage
\bibliographystyle{chicago}
\renewcommand{\baselinestretch}{1}
\normalsize
\clearpage%
\phantomsection%
\addcontentsline{toc}{chapter}{\numberline{}{Bibliography}}%
\bibliography{Ref}  

\end{document}